\let\csname equation*\endcsname\relax
\let\csname endequation*\endcsname\relax
\begin{document}
\newtheorem{definition}{Definition}[section]
\newtheorem{lemma}{Lemma}[section]
\newtheorem{remark}{Remark}[section]
\newtheorem{theorem}{Theorem}[section]
\newtheorem{proposition}{Proposition}
\newtheorem{assumption}{Assumption}
\newtheorem{example}{Example}
\newtheorem{corollary}{Corollary}[section]
\def\ep{\varepsilon}
\def\Rn{\mathbb{R}^{n}}
\def\Rm{\mathbb{R}^{m}}
\def\E{\mathbb{E}}
\def\hte{\hat\theta}
%\numberwithin{theorem}{section}
%\numberwithin{definition}{section}
\renewcommand{\theequation}{\thesection.\arabic{equation}}

\title[Onsager-Machlup Function for Jump-Diffusions]{The Onsager-Machlup Function as Lagrangian for the Most Probable Path  of  a Jump-Diffusion Process}

\author{Ying Chao$^1$ and
Jinqiao Duan$^{2,*}$}

\address{$^1$ School of Mathematics and Statistics \& Center for Mathematical Sciences \& Hubei Key Laboratory for Engineering Modeling and Scientific Computing, Huazhong University of Science and Technology, Wuhan 430074,  China}
\address{$^{2}$Department of Applied Mathematics, Illinois Institute of Technology, Chicago, IL 60616, USA}

\address{$^*$ Author to whom correspondence should be addressed: duan@iit.edu}
%This work was partly supported by the NSF grant 1620449,  and NSFC grants 11531006 and 11771449.
\ead{yingchao1993@hust.edu.cn, duan@iit.edu.}

\begin{abstract}
This work is devoted to deriving the Onsager-Machlup function for a class of stochastic dynamical systems under  (non-Gaussian) L\'{e}vy noise  as well as (Gaussian) Brownian noise, and examining the corresponding most probable paths.
This Onsager-Machlup function is the Lagrangian giving the most probable path connecting metastable states for jump-diffusion processes. This is done by applying the Girsanov transformation for measures induced by jump-diffusion processes. Moreover, we have found this Lagrangian function is consistent with the result in the special case of diffusion processes. Finally, we apply this new Onsager-Machlup function to investigate dynamical behaviors  analytically and numerically in several
examples. These include the transitions  from one metastable state to another metastable state in a double-well system, with numerical experiments illustrating most probable transition paths for various noise parameters.

\textbf{Keywords}:  Onsager-Machlup action functional; noise-induced transition paths; jump-diffusion processes; stable L\'{e}vy noise; Lagrangian; most probable paths.
\end{abstract}

\section{Introduction}

A diffusion process is usually a solution process of a differential equation driven by Brownian motion.
The Onsager-Machlup (OM) function of a diffusion process was first introduced  in \cite{OM} and  also considered  in \cite{St, DB}, among many more recent efforts.  It appears in the  asymptotic probability estimation   of sample paths of a diffusion process lying within a tube along a   smooth path.   In fact, this asymptotic likelihood   contains an exponential term whose    exponent  is a negative integral over a   time interval, and the      integrand in this integral is the so-called OM function.  For convenience, we also call
the integral of OM function over a time interval the OM action functional.
%As a certain part of the integrand of a functional integral over paths of the process in the sense of \cite{OM}, it describes.
 In other words,  the OM function helps to evaluate the asymptotic probability  measure of a small neighborhood of a smooth path. Taking the idea of D\"{u}rr and Bach \cite{DB}, another more physical meaning can be given to the OM function, which interprets the OM function as a Lagrangian for determining the most probable tube of a diffusion process by means of a variation principle. Thus, the OM function serves as a tool to study some dynamical behaviors of stochastic dynamical systems, such as most probable transition paths connecting metastable states. \par
%The most probable path could be obtained by minimizing the OM funtional.

The method concerning the derivation of the OM function for a diffusion process has been developed in the preceding years. Onsager and Machlup  \cite{OM} were the first to derive the OM function for diffusion processes with linear drift and constant diffusion coefficients.  Subsequently, Tisza and Manning \cite{TM} began to generalize the results of \cite{OM} to nonlinear equations. Takahashi-Watanabe \cite{TS} and Fujita-Kotani \cite{FK} further derived  OM function for Brownian motion on a complete Riemannian manifold. Another purely probabilistic proof of Onsager-Machlup formula for diffusion processes on a Riemannian manifold was given by Hara and Takahashi \cite{HT}. There are other works devoted to deriving the OM functions for   diffusion processes using various approaches \cite{St, BDS, DB}. For example, D\"{u}rr and Bach \cite{DB} derived the OM function based on the Girsanov transformation for measures induced by diffusion processes (see also \cite{IS}). In addition, an application concerned with the fluctuation phenomena is described by Singh \cite{S} and Deza {\em et al} \cite{DGW}. More recently, OM function has been used to analyze issues related to the trajectory entropy of the overdamped Langevin equation \cite{HYC} as well as to data assimilation \cite{Sug, B}. Note that  most existing works mentioned above are  for diffusion processes, i.e. solution processes of stochastic differential equations with (Gaussian)  Brownian motion.\par

However, random fluctuations in complex biological and physical systems are often non-Gaussian rather than Gaussian \cite{ZLDK, WCDKL, W}.  L\'{e}vy motions are appropriate models for a class of important non-Gaussian processes with jumps or bursts \cite{ BSW}. In the sense of L\'evy-It\^o decomposition, a general L\'evy motion could be understood as a sum of a Brownian motion and a pure jump L\'evy motion, in addition to a  drift term which may be absorbed in the vector field in stochastic differential equation (SDE) models for these complex systems. Note that Ditlevsen \cite{DPD} found that a climate change system may be modelled by an SDE with Brownian motion and L\'{e}vy motion, and an objective is to understand climate transitions between climate metastable states. An SDE with Brownian motion and L\'{e}vy motion also appeared for modeling certain gene transcriptions in biology \cite{ZLDK}, with a goal to quantify the transitions between low and high protein concentrations.
Motivated by these modelling efforts,  we  plan to investigate    transition phenomena for these non-Gaussian complex systems.  Especially,  it is desirable to consider the  OM functions for jump-diffusion processes, defined as solution processes of stochastic differential equations with Brownian motion $B_t$ and L\'evy motion  $L_t$, and examine the corresponding most probable paths between metastable paths.  As far as we know, this is not yet done. 

% The meaning of a jump-diffusion process  will be made precise in the next section.

We remark that Moret and Nualart \cite{MN}  recently derived  the OM function for the fractional (but still Gaussian) Brownian motion.  Bardina {\em et al} \cite{XCS} dealt with the asymptotic evaluation of the Poisson measure for a tube around a jump path  but that work is not for a general L\'evy motion.

%We remark that there is by no means universal agreement about the use of the phrase jump-diffusion process and the precise meaning on this terminology for our case will be specified in Section 2.\par
%It is now the time to generalize the OM function of a diffusion process to a jump-diffusion process situation and to make sure the implication of OM function still valid in the case of  L\'{e}vy noise.And Imkeller and Pavlyukevich further analyzed dynamical behaviors for this model.

In this present paper, we consider a class of one-dimensional nonlinear stochastic dynamical systems   with     Brownian noise and   L\'{e}vy noise. The solution process is a jump-diffusion process.    Using the Girsanov transformation for a measure induced by this jump-diffusion process, we derive the Onsager-Machlup function, which  may still  be regarded  as the Lagrangian giving the most probable tube around a smooth path as in the case of diffusion processes \cite{DB}. Different from the case of diffusion processes,  an extra term depicting the impact of  L\'{e}vy noise will appear in  the  OM function. It is worthwhile to mention that our result is consistent with the result of the diffusion processes     \cite{OM, DB,BDS} in the absence of  L\'{e}vy noise. Thus,  our work extracts the non-trivial effect of  pure jump   L\'{e}vy noise in stochastic dynamics and we will see that this will affect the most probable paths.   With the help of OM function and a variation principle, the most probable path from one     state to  another, especially the most probable path from one metastable state to another metastable state over a finite time interval,  for some stochastic dynamical systems can be   found by numerical simulation.  We will illustrate this for a stochastic system with linear potential and a prototypical double-well system under L\'evy noise and Brownian noise.

%\medskip

An inspiration for this paper goes back to the work \cite{DB} by D\"urr and Bach. We generalize and improve their results to  a class of dynamical systems under random fluctuations consisting of  Brownian noise and  L\'evy noise. This will give rise to several difficulties both in analytic and probabilistic aspects as the non-Gaussian pure jump L\'evy noise is present. In addition, we use  the same notations as   in \cite{DB} to compare our results with the diffusion case.\par

This paper is organized as follows.  In Section 2, we recall some basic concepts about L\'{e}vy motions,  and present the framework  for deriving the OM function of a stochastic dynamical system.  After setting up the theory of induced measures in Section 3,  we  derive the OM function   in Section 4.   Subsequently in Section 5,  we present the equation of motion for the most probable path of a class of jump-diffusion processes    and discuss its solutions.  Some concrete examples are tested to illustrate our results in Section 6. Finally, we summarize our work in Section 7.

\section{Preliminaries}
We now  recall some basic facts about L\'evy motions \cite{Ap, Sato, Duan}, introduce a class of  stochastic differential equations to be studied, and define the OM function and OM functional.
\subsection{L\'{e}vy motions}
%\begin{figure}[!htb]
%\centering\includegraphics[height=6cm ,width=6cm]{yuan.eps}
%\caption{(Color online) A  unit annular domain with small absorbing gates on an otherwise reflecting outer circular boundary.}
%\label{Fig.0}
%\end{figure}
Let $(\Omega, \mathcal{F}, (\mathcal{F}_t)_{t\geq0}, \mathds{P})$ be a filtered probability space, where $\mathcal{F}_t$ is a nondecreasing family of sub-$\sigma$-fields of $\mathcal{F}$ satisfying the usual conditions. An $\mathcal{F}_t$-adapted stochastic process $L_t=L(t)$ taking values in $\mathbb{R}^n$ with $L(0)=0$ $a.s.$ is called a L\'{e}vy motion if it is stochastically continuous,   with independent increments and stationary increments.\par

An $n$-dimensional L\'{e}vy motion can be characterized by a drift vector $b\in \mathbb{R}^n$, an $n \times n$ non-negative-definite, symmetric matrix Q,  and a Borel measure $\nu$ defined on ${\mathbb{R}^n}\backslash \{ 0\}$. We call $(b,Q,\nu)$ the generating triplet of the L\'{e}vy motion $L_t$. Moreover, we have the following L\'{e}vy-It\^o decomposition for $L_t$:
\begin{equation}\label{decomposition}
{L_t}= bt + B_{Q}(t) + \int_{|x|< 1} x \tilde N(t,dx) + \int_{|x|\ge 1} x N(t,dx),\notag
\end{equation}
where $N(dt,dx)$ is the Poisson random measure on $\mathbb{R}^{+}\times({\mathbb{R}^n}\backslash \{ 0\})$, $\tilde N(dt,dx) = N(dt,dx) - \nu (dx)dt$ is the compensated Poisson random measure, $\nu(A)=EN(1,A)$ with $A\in\mathcal{B}({\mathbb{R}^n}\backslash \{ 0\})$ is the jump measure, and $B_Q(t)$ is an independent $n$-dimensional Brownian motion with covariance matrix $Q$. Here $|\cdot|$ denotes the Euclidean norm.\par

A scalar $\alpha$-stable L\'evy motion  $L_t^{\alpha, \beta}$ is a special L\'evy motion, with
 non-Gaussianity index (or  stability index) $\alpha\in(0,2)$ and skewness parameter $\beta\in[-1,1]$. It has the generating triplet $(b,0,\nu_{\alpha, \beta})$. Its jump measure $\nu_{\alpha, \beta}$ is in the form of $$\nu_{\alpha, \beta} (d\xi)=c_1|\xi|^{-1-\alpha}\chi_{(0,\infty)}(\xi)d\xi+c_2|\xi|^{-1-\alpha}\chi_{(-\infty,0)}(\xi)d\xi$$
%c_1\geq0$, $c_2\geq0$, $c_1+c_2>0$
with $\beta=\frac{c_1-c_2}{c_1+c_2}$, $c_1=k_{\alpha}\frac{1+\beta}{2}$, and $c_2=k_{\alpha}\frac{1-\beta}{2}$, where
$$
k_{\alpha}=
\left \{
  \begin{array}{ll}
    \frac{\alpha(1-\alpha)}{\Gamma(2-\alpha)\cos(\frac{\pi\alpha}{2})}, &  \hbox{if $\alpha<1$,} \\
    \frac{2}{\pi}, & \hbox{if $\alpha=1$.}
  \end{array}
\right.
$$
Here, $\chi(\xi)$ denotes indicator function and $\Gamma$ is the Gamma function. If drift vector $b=0$, and $\nu_{\alpha, \beta}$ is symmetric, i.e. $\beta=0$, then this stable process is called a symmetric $\alpha$-stable process. Note that $\int_{|\xi|<1}|\xi|\nu_{\alpha, \beta}(d\xi)$ is finite if and only if $\alpha<1$ and the integral $\int_{|\xi|\geq1}|\xi|\nu_{\alpha, \beta}(d\xi)$ is finite if and only if $\alpha>1$.

\subsection{ Framework }
Consider the following scalar stochastic differential equation (SDE) defined on $\tau=[s,u]$ with respect to $(\Omega, \mathcal{F}, (\mathcal{F}_t)_{t\geq0}, \mathds{P})$
%w.r.t the probability space $(\Omega, \mathcal{F}, \mathds{P})$:
%\begin{eqnarray}\label{Equation-1}
%dX_t=f(X_{t-})dt+g(X_{t-})dB_t+dL_t,~~~~~~\quad \hbox{in}\quad \mathbb{R},\\
%g(x)>0, X_s=x_0\in \mathbb{R}.\notag
%\end{eqnarray}
\begin{equation} \label{Equation-1}
  \begin{split}
  &dX_t=f(X_{t-})dt+g(X_{t-})dB_t+dL_t,    \\
  &X_s=x_0\in \mathbb{R}.
  \end{split}
\end{equation}
%\begin{equation}\label{Equation-1}
%\begin{array}{rl}
%&dX_t=f(X_{t-})dt+g(X_{t-})dB_t+dL_t  \\
%&g(x)>0, \quad X_s=x_0\in \mathbb{R}.
%\end{array}
%\end{equation}

\noindent Where ${L_t}$ is a L\'{e}vy process with characteristics $(0,0,\nu)$ satisfying $\int_{|\xi|<1}\xi\nu(d\xi)<\infty$, $B_t$ is a standard Brownian motion and $X_{t-}$ is the left limit at the point $t$, i.e. $X_{t-}=\lim_{s\uparrow t}X_{s}$.\par
More precisely, the above equation can be written in the following form by employing the L\'{e}vy-It\^o decomposition :
\begin{equation} \label{Equation-2}
  \begin{split}
  &dX_t=f(X_{t-})dt+g(X_{t-})dB_t+\int_{|x|<1}x\tilde{N}(dt,dx)+\int_{|x|\geq1}xN(dt,dx),    \\
  &X_s=x_0\in \mathbb{R}.\notag
  \end{split}
\end{equation}
%\begin{equation}\label{Equation-2}
%\begin{array}{rl}
%&dX_t=f(X_{t-})dt+g(X_{t-})dB_t+\int_{|x|<1}x\tilde{N}(dt,dx)+\int_{|x|\geq1}xN(dt,dx) \\
%&g(x)>0, \quad X_s=x_0\in \mathbb{R}.
%\end{array}
%\end{equation}
\noindent The last term in the preceding equation involves large jumps that are controlled by $x$. This term can be handled by using interlacing \cite[ Page 365] {Ap}, and it makes sense to begin by omitting this term and concentrate on the study of the equation driven by continuous noise interspersed with small jumps. To this end, we consider the following SDE:
\begin{equation} \label{Equation-main}
  \begin{split}
  &dX_t=f(X_{t-})dt+g(X_{t-})dB_t+\int_{|x|<1}x\tilde{N}(dt,dx),   \\
  &X_s=x_0\in \mathbb{R}.
  \end{split}
\end{equation}
%\begin{equation}\label{Equation-main}
%\begin{array}{rl}
%&dX_t=f(X_{t-})dt+g(X_{t-})dB_t+\int_{|x|<1}x\tilde{N}(dt,dx),\\
%&g(x)>0, \quad X_s=x_0\in \mathbb{R}.
%\end{array}
%\end{equation}
%If we further assume $f(x)$ and $g(x)$ are Lipschitz continuous function, then there exists a unique solution to (\ref{Equation-main}) and the solution process is adapted and c\`{a}dl\`{a}g, refer to \cite[ Page 367] {Ap}

\noindent If we further assume that $f(x)$ and $g(x)$ are locally Lipschitz continuous functions and satisfy ``one sided linear growth'' condition in the following sense:\par
\noindent{\bf C1 }(Locally Lipschitz condition) For any $R>0$, there exists $K_1>0$ such that, for all $|y_1|, |y_2|\leq R$,
$$|f(y_1)-f(y_2)|^2+|g(y_1)-g(y_2)|^2\leq K_1|y_1-y_2|^2,$$
{\bf C2 }(One sided linear growth condition) There exists $K_2>0$ such that, for all $y\in\mathbb{R}$,
$$|g(y)|^2+2y\cdot f(y)\leq K_2(1+|y|^2),$$
then there exists a unique global solution to (\ref{Equation-main}) and the solution process is adapted and c\`{a}dl\`{a}g, refer to Theorem 3.1 of \cite{ABW}, or \cite{XZ}. We call a solution of (\ref{Equation-main}) a jump-diffusion process. As we see, it happens to be a diffusion process only in the absence of L\'{e}vy noise. And note that there is by no means universal agreement about the use of the phrase jump-diffusion process. Thus, it makes sense to use this terminology to denote the solution of (\ref{Equation-main}) throughout the paper  \cite[ Page 383] {Ap}.\par

%\noindent {\bf C1 }(Locally Lipschitz condition) For any $R>0$, there exists $M_1>0, M_2>0$ such that, for all $|y_1|, |y_2|\leq R$,
%$$|f(y_1)-f(y_2)|^2+|g(y_1)-g(y_2)|^2\leq M_1|y_1-y_2|^2,$$
%$$|k(y_1)-k(y_2)|^2+|g(y_1)-g(y_2)|^2\leq M_2|y_1-y_2|^2.$$
%{\bf C2 }(Globally Lipschitz condition) There exists $L>0$ such that, for all $y_1, y_2\in\mathbb{R}$,
%$$\int_{|\xi|<1}|h(y_1,\xi)-h(y_2,\xi)|^2\nu(d\xi)\leq K_2|y_1-y_2|^2.$$
%{\bf C2 }(One sided linear growth condition) There exists $K_1>0, K_2>0$ such that, for all $y\in\mathbb{R}$,
%$$|g(y)|^2+2y\cdot f(y)\leq K_1(1+|y|^2),$$
%$$|g(y)|^2+2y\cdot k(y)\leq K_2(1+|y|^2).$$

%\noindent If we further assume $f(x)$ and $g(x)$ are locally Lipschitz continuous function and satisfy "one sided linear growth" condition, i.e. there exists a constant $K>0$ such that $|g(x)|^2+2x\cdot f(x)\leq K(1+|x|^2)$, then there exists a unique global solution to (\ref{Equation-main}) and the solution process is adapted and c\`{a}dl\`{a}g, refer to Theorem 3.1 of \cite{ABW}. We call a solution of (\ref{Equation-main}) a jump-diffusion process. As we see, it happens to be a diffusion process only in the absence of L\'{e}vy noise. And note that there is by no means universal agreement about the use of the phrase jump-diffusion process. Thus, it makes sense to use this terminology to denote the solution of (\ref{Equation-main}) throughout the paper  \cite[ Page 383] {Ap}.\par

% but not a separable   space

Now we will introduce some notations and terminologies to be used throughout the paper.
Denote by $C^1(\mathbb{R})$ the set of all real-valued continuous differentiable functions, $C^2(\mathbb{R})$ the set of all real-valued twice continuous differentiable functions and $ H^1([s, u];\mathbb{R})$ the set of all real-valued square integrable functions with square integrable weak derivative defined on $\tau=[s,u]$.
Denote the space of paths of a solution process of  (\ref{Equation-main}) by $D_\tau^{x_0}$, which is the space of c\`{a}dl\`{a}g functions, i.e., $$D_\tau^{x_0}=\{x(t)\mid x:\tau\rightarrow \mathbb{R}, x(t) \hbox{ continuous from the right and have limits on the left}, x(s)=x_0\}.$$
\noindent For each finite subset $S$ of $[s, u]$ write $\pi_S$ for the projection map from $D_\tau^{x_0}$ into $\mathbb{R}^S$ that takes an $x$ onto its vector of values $\{x(t): t\in S\}$. These projections generate the projection $\sigma$-field $\mathcal{P}$.  The  measurability in this paper will always refer to this $\sigma$-field.\par
As every  c\`{a}dl\`{a}g function on $\tau$ is bounded, equipping $D_\tau^{x_0}$ with the uniform norm $\|\cdot\|$
$$\|x\|=\sup_{t\in[s,u]}|x(t)|, x(t)\in D_\tau^{x_0},$$
$D_\tau^{x_0}$ is a Banach space. No other norm will be used for $D_\tau^{x_0}$ in this paper. We could show that the $\sigma$-field $\mathcal{B}_0$ generated by closed balls for uniform norm is the projection $\sigma$-field $\mathcal{P}$,  but not the larger Borel $\sigma$-field $\mathcal{B}_\tau^{x_0}$. For more details, see \cite[ Page 87-90] {Pollard}. \par
\begin{lemma}\label{lemma 2.1} $\mathcal{B}_0=\mathcal{P}$ and $\mathcal{P}\subsetneqq\mathcal{B}_\tau^{x_0}.$
\end{lemma}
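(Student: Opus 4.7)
The equality $\mathcal{B}_0=\mathcal{P}$ splits into two inclusions. For $\mathcal{B}_0\subseteq\mathcal{P}$: fix a closed ball $\bar B(y,r)$ and let $Q\subset\tau$ be a countable dense subset containing $s$ and $u$. Because $x-y$ is itself c\`adl\`ag, right-continuity lets one pass to rational times and gives $\sup_{t\in\tau}|x(t)-y(t)|=\sup_{t\in Q}|x(t)-y(t)|$, so
\[
\bar B(y,r)=\bigcap_{t\in Q}\pi_t^{-1}\bigl([y(t)-r,\,y(t)+r]\bigr)\in\mathcal{P}.
\]

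For $\mathcal{P}\subseteq\mathcal{B}_0$: I show each projection $\pi_t$ is $\mathcal{B}_0$-measurable. The case $t=s$ is trivial since $\pi_s\equiv x_0$; for $t\in(s,u]$, consider the step-function paths $y_{t,c,1/n}\in D_\tau^{x_0}$ equal to $c\in\mathbb Q$ on the small interval $[t,t+1/n)\cap\tau$ and equal to $x_0$ elsewhere (these are c\`adl\`ag and start at $x_0$). Right-continuity of $x$ forces
\[
\lim_{n\to\infty}\|x-y_{t,c,1/n}\|=\max\bigl(\sup_r|x(r)-x_0|,\,|x(t)-c|\bigr),
\]
a $\mathcal{B}_0$-measurable function of $x$ (each $\|x-y_{t,c,1/n}\|$ has closed-ball sublevel sets). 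Letting $c\to\infty$ through $\mathbb Q$ the maximum eventually equals $c-x(t)$, so one further limit recovers $x(t)$ and exhibits $\pi_t$ as $\mathcal{B}_0$-measurable. This is the main technical step, where c\`adl\`ag regularity lets one replace what looks like an uncountable family of ball operations by countably many.

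For the strict inclusion $\mathcal{P}\subsetneqq\mathcal{B}_\tau^{x_0}$: the uniform metric makes $D_\tau^{x_0}$ non-separable. For each $r\in(s,u)$ set $y_r(t)=x_0+\chi_{[r,u]}(t)$; then $y_r\in D_\tau^{x_0}$ and $\|y_r-y_{r'}\|=1$ whenever $r\ne r'$. Any convergent sequence in the uncountable discrete set $F=\{y_r:r\in(s,u)\}$ is eventually constant, so every subset of $F$ is closed in $D_\tau^{x_0}$, hence Borel. Since $\mathcal{P}$ is generated by countably many projections, $|\mathcal{P}|\le\mathfrak c$, whereas $|2^F|=2^{\mathfrak c}$; a cardinality count produces a Borel set outside $\mathcal{P}$.
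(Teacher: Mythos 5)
Your proof is essentially correct and, for the two harder steps, takes a genuinely different route from the paper. The first inclusion $\mathcal{B}_0\subseteq\mathcal{P}$ is the same in both arguments (right-continuity reduces the uniform distance to a supremum over a countable dense set of times). For $\mathcal{P}\subseteq\mathcal{B}_0$ the paper simply asserts that a cylinder set $\{x: x_t>a\}$ ``can be expressed as a countable union of closed balls'' and gives no construction; you instead prove that each projection $\pi_t$ is $\mathcal{B}_0$-measurable by taking iterated limits of the ball-measurable functionals $x\mapsto\|x-y_{t,c,1/n}\|$. This is more work but also more convincing, since in a non-separable space the existence of such a countable-union representation is exactly the kind of claim one should not wave at; your limiting argument sidesteps it entirely and only uses that distance-to-a-point functions are $\mathcal{B}_0$-measurable. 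For the strict inclusion, the paper argues that non-separability prevents open sets from being countable unions of closed balls and concludes $\mathcal{B}_\tau^{x_0}\nsubseteq\mathcal{B}_0$; as stated this only rules out one particular representation, not membership in the generated $\sigma$-field, whereas your cardinality comparison ($|\mathcal{P}|\le\mathfrak{c}$ because $\mathcal{P}$ is generated by the countably many rational-time projections, versus $2^{\mathfrak{c}}$ closed subsets of the uniformly discrete family $\{y_r\}$) is airtight and is the standard clean way to see this. Two small points to tidy up: you never verify the easy inclusion $\mathcal{P}\subseteq\mathcal{B}_\tau^{x_0}$, which is needed for the claim $\mathcal{P}\subsetneqq\mathcal{B}_\tau^{x_0}$ and follows at once since each $\pi_t$ is $1$-Lipschitz for the uniform norm (the paper instead checks that cylinder sets are open); and the identification of $\lim_n\|x-y_{t,c,1/n}\|$ as $\max\bigl(\sup_r|x(r)-x_0|,\,|x(t)-c|\bigr)$ needs a word at $t=u$, where $x(u)$ can be an isolated value --- harmless, since you only need the limit to be $\mathcal{B}_0$-measurable and to behave like $c-x(t)+O(1)$ as $c\to\infty$.
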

\begin{proof}Consider a closed ball $B_\rho=\{y\in D_\tau^{x_0}:\sup_{t\in[s,u]}|y_t-z_t|\leq\rho\}$, where $z$ is an element of $D_\tau^{x_0}$. Since the functions in $D_\tau^{x_0}$ are right-continuous,
$$B_\rho={\cap}_{t_k} \{y\in D_\tau^{x_0}:|y_{t_k}-z_{t_k}|\leq\rho\}\in \mathcal{P},$$
where $t_k$ are the rational points of $[s,u]$. Therefore $\mathcal{B}_0\subset\mathcal{P}$. Conversely, we can express $B=\{x: x_{t}>a\}$ as a countable union of closed balls. Thus $\mathcal{B}_0$=$\mathcal{P}$. \par
Consider a cylinder set $B=\{x: x_{t_1}<b\}$. For every $x\in B$, we have $x_{t_1}\leq b-\epsilon$ where $\epsilon$ can be chosen sufficiently small. We find that there exists a neighborhood of $x$, denoted by $K(x,\epsilon)=\{y\in D_\tau^{x_0}\mid \|x-y\|<\epsilon,\epsilon>0\}$. For every $y\in K(x,\epsilon)$, $|y_{t_1}-x_{t_1}|\leq\parallel y-x\parallel<\epsilon$, i.e. $y_{t_1}<x_{t_1}+\epsilon\leq b$, thus $K(x,\epsilon)\subset B$ and $B$ is an open set. Hence it follows that $\{x: x_{t_1}<b_1,...,x_{t_n}<b_n\}\in \mathcal{B}_\tau^{x_0}$, and thus $\mathcal{P}\subset\mathcal{B}_\tau^{x_0}$. Since $D_\tau^{x_0}$ equipped with the uniform norm is not separable, open subset of a metric space  $D_\tau^{x_0}$ can not be written as a countable union of closed balls. Thus, $\mathcal{B}_\tau^{x_0}\nsubseteq \mathcal{B}_0$.\par
The proof is complete.
\end{proof}
%Consider a closed ball $B_\rho=\{y:y\in S(z,\rho)\}$, where $z$ is an element of $D_\tau^{x_0}$ and $S(z,\rho)$ is an closed ball with center at $z$. Since the functions in $D_\tau^{x_0}$ are right-continuous,
%Let $B=\{x: x_{t_1}<b\}$ be a cylinder set. We show that this set is open. To see that, take arbitrary $x\in B$, we have $x_{t_1}\leq b-\epsilon$ where $\epsilon$ can be chosen arbitrarily small. We find that there exists a neighborhood of $x$, denoted by $K(x,\epsilon)=\{y\in D_\tau^{x_0}\mid \|x-y\|<\epsilon,\epsilon>0\}$. For arbitral $y\in K(x,\epsilon)$, we have $|y_{t_1}-x_{t_1}|\leq\parallel y-x\parallel<\epsilon$, i.e. $y_{t_1}<x_{t_1}+\epsilon\leq b$, thus $K(x,\epsilon)\subset B$, $B$ is an open set. Hence it follows that $\{x: x_{t_1}<b_1,...,x_{t_n}<b_n\}\in \mathcal{B}_\tau^{x_0}$, and therefore $\mathcal{P}\subset\mathcal{B}_\tau^{x_0}$. Since $D_\tau^{x_0}$ equipped with the uniform norm is not separable, open subset of a metric space  $D_\tau^{x_0}$ can not be written as a countable union of closed balls. Thus, $\mathcal{B}_\tau^{x_0}\nsubseteq \mathcal{B}_0$.\par
Since this paper is interested in the problem of finding the most probable tube of $X_t$, it only makes sense to ask for the probability that paths lie within the closed tube $K(z,\epsilon)$:
\begin{equation}\label{tube}
K(z,\epsilon)=\{x\in D_\tau^{x_0}\mid z\in D_\tau^{x_0},\|x-z\|\leq\epsilon,\epsilon>0\}.
\end{equation}
In fact, this probability can be calculated or estimated by using the induced measure in function space.\par
Define the measure $\mu_X$ on $\mathcal{P}$ induced by the process (\ref{Equation-main}) via
$$\mu_X(B)=\mathds{P}(\{\omega\in\Omega\mid X_t(\omega)\in B\}), \;\;  \mbox{ for }   B\in \mathcal{P}.$$
Thus, by means of the induced measure, once an $\epsilon>0$ is given we can compare the probabilities of tubes of the same ``thickness'' for all $z\in D_\tau^{x_0}$ using
\begin{equation}\label{tubeP}
\mu_X(K(z,\epsilon))=\mathds{P}(\{\omega\in\Omega\mid X_t(\omega)\in K(z,\epsilon)\}),
\end{equation}
as $K(z,\epsilon)\in \mathcal{P}$. Note that the computation of probability $\mu_X(K(z,\epsilon))$ is an object for investigation in the next section.\par
\begin{remark}\label{Remark 2.1}
We choose the uniform norm   since we are able to prove that the closed balls for uniform norm generate the same $\sigma$-field $\mathcal{P}$. In fact, to ensure the measurability of a closed tube, we only need to prove $\mathcal{B}_0\subset\mathcal{P}$. Note that $D_\tau^{x_0}$ equipped with Skorohod norm is separable and thus the projection $\sigma$-field $\mathcal{P}$ is the larger Borel $\sigma$-field $\mathcal{B}_\tau^{x_0}$. But, it is of little help in our case, since Skorohod norm can not be handled easily in the following derivation.
\end{remark}
\renewcommand{\theequation}{\thesection.\arabic{equation}}
\setcounter{equation}{0}

\subsection{Definition of the Onsager-Machulup Function}
%for the system (\ref{Equation-main}) \propto
As mentioned above,  we are concerned with the problem of finding the most probable tube $K(z,\epsilon)$ given by (\ref{tube}).
As this tube depends on the choice of a function $z(t)$, we have to look for that function $z(t)$ which maximizes (\ref{tubeP}). If we restrict ourselves on differentiable functions $z(t)$, then the following definition makes sense \cite{OM, TM}.\par
\begin{definition}\label{Definition 2.1} Let $\epsilon>0$ be given. Consider a tube surrounding a reference path $z(t)$. If for $\epsilon$ sufficiently small we estimate the probability of the solution process $X_t$ lying in this tube in the form :
$$\mathds{P}(\{\|X-z\|\leq\epsilon\})\propto C(\epsilon)\exp\{-\frac{1}{2}\int_s^uOM(\dot{z},z)dt\},$$
then integrand $OM(\dot{z},z)$ is called Onsager-Machulup function. Where $\propto$ denotes the equivalence relation for $\epsilon$ small enough. We also call  $\int_s^u OM(\dot{z},z)dt$ the  Onsager-Machulup  functional. In analogy to classical mechanics, we also call the OM function the Lagrangian function, and the OM functional the action functional.
\end{definition}
%\begin{definition}\label{Definition 2.1} Let $\epsilon>0$ be given. For $\epsilon\rightarrow0$, a certain part of the integrand, contained in the exponential term of asymptotic probability for a small neighborhood of a differential path, will be called  Onsager-Machulup function denoted by $OM(\dot{z},z)$. We also call  $\int_s^u OM(\dot{z},z)dt$ the  Onsager-Machulup  functional. In analogy to classical mechanics, we also call the OM function the Lagrangian function, and the OM functional the action functional.
%\end{definition}
%asymptotic probability    contains an exponential term whose    exponent  is an integral over a finite time interval, and the    negative integrand in this integral is the OM function
%Note that Br\"{o}cker \cite{B} has demonstrated that minimising paths of the OM functional are more typical for the dynamics and in fact carry a rigorous interpretation as most probable path of diffusion processes, different from maximum energy paths which do not. Thus, inspirited by this work \cite{B}, it is reasonable to find most probable paths for jump-diffusion processes by minimising the Onsager-Machulup  functional rather than any other functional.

\begin{remark}\label{remark 2.2}
 Br\"{o}cker \cite{B} has recently demonstrated that minimising paths of the OM functional are more typical for the dynamics and in fact carry a rigorous interpretation as most probable path of diffusion processes. Thus, inspirited by this work \cite{B}, it is reasonable to find most probable paths for jump-diffusion processes by minimising the Onsager-Machulup  functional rather than any other functional. In particular, for an SDE with pure jump L\'evy noise, Definition \ref{Definition 2.1} would be still applicable, and the minimizer of  the Onsager-Machulup  functional $\int_s^u OM(\dot{z},z)dt$  gives a notion of most probable path for this system. In addition, the minimizer  $z(t)$ may be chosen from a more general function space.
\end{remark}
If we restrict ourselves on twice differentiable functions $z(t)$, another more physical meaning can be given to the OM function, which interprets the OM function as a Lagrangian for determining the most probable tube of a jump-diffusion process. In other words, let $z_m(t)$ be a path  that maximizes $\mu_X(K(z,\epsilon))$. For $\epsilon$ sufficiently small, the path  $z_m(t)$ can be found by variation of the OM action  functional $\int_s^u OM(\dot{z},z)dt$.
This idea comes from \cite{DB} for diffusion processes and it is applicable to our setting.

%\begin{remark}\label{remark 2.2} Some comments on Definition \ref{Definition 2.1} have to be made: This definition would be still applied for the SDEs with pure jump L\'evy noise, and the minimizer of Onsager-Machulup  functional $\int_s^u OM(\dot{z},z)dt$ could give a notion of most probable path for this system. In addition, function $z(t)$ can be chosen from a more general space, such as jump function space; see \cite{DPD}. %However, our work concentrate exclusively on the derivation of OM function for system (\ref{Equation-main}) with $g(x)\neq0$.
%\end{remark}

\renewcommand{\theequation}{\thesection.\arabic{equation}}
\setcounter{equation}{0}

\section{Reformulation of the Radon-Nikodym derivative of induced measures}
%with $g(x)=c$
%Our goal is to compute the OM function for system (\ref{Equation-main}) with $g(x)=c$. The key point here is to represent $\mu_X(K(z,\epsilon))$ as an integral over the new tube $K(x_0,\epsilon)$, which is shifted from the original tube $K(z,\epsilon)$. Here the change of domain for the integration is achieved by the absolute continuity between $\mu_X$ and a quasi translation invariant measure $\mu$. Then by approximating the integrand of this integral for $\epsilon$ small, we get $\mu_X(K(z,\epsilon))\propto M(z)\mu(K(x_0,\epsilon))$, where $M(z)$ can help to derive the OM function. Thus, in this section, we present some definitions and theorems needed for our purpose.\par
%Since our derivations of OM function is based on the Radon-Nikodym derivative of the measure in function space induced by $X_t$ with respect to the measure corresponding to an auxiliary jump-diffusion process $Y_t$, so consider the following SDEs:
In this section, we establish some facts about the absolute continuity between induced measure and the quasi-translation invariant measure.

\subsection{The absolute continuity between induced measures}

%The first preparation below is to give a sufficient condition for the absolute continuity of an induced measure with respect to another and to clarify the functional property for the Radon-Nikodym derivative of induced measures.\par
Recall that $\mu_X$ is absolutely continuous with respect to $\mu_Y$ and write $\mu_X\ll\mu_Y$ if $\mu_Y(A)=0$, then $\mu_X(A)=0$ for all $A\in\mathcal{P}$. And we will call measures $\mu_X, \mu_Y$ equivalent ($\mu_X\sim\mu_Y$) if $\mu_X$ is absolutely continuous with respect to $\mu_Y$ ($\mu_X\ll\mu_Y$) and if $\mu_Y\ll\mu_X$, refer to (\cite[Page 161] {Oksendal}, \cite{Sato}).\par
%To ensure the existences and uniqueness of solutions of the   SDEs (\ref{Equation-x})-(\ref{Equation-y}) below, we propose the following conditions, refer to \cite{ABW, XZ}:\par
%\noindent {\bf C1 }(Locally Lipschitz condition) For any $R>0$, there exists $M_1>0, M_2>0$ such that, for all $|y_1|, |y_2|\leq R$,
%$$|f(y_1)-f(y_2)|^2+|g(y_1)-g(y_2)|^2\leq M_1|y_1-y_2|^2,$$
%$$|k(y_1)-k(y_2)|^2+|g(y_1)-g(y_2)|^2\leq M_2|y_1-y_2|^2.$$
%{\bf C2 }(Globally Lipschitz condition) There exists $L>0$ such that, for all $y_1, y_2\in\mathbb{R}$,
%$$\int_{|\xi|<1}|h(y_1,\xi)-h(y_2,\xi)|^2\nu(d\xi)\leq K_2|y_1-y_2|^2.$$
%{\bf C2 }(One sided linear growth condition) There exists $K_1>0, K_2>0$ such that, for all $y\in\mathbb{R}$,
%$$|g(y)|^2+2y\cdot f(y)\leq K_1(1+|y|^2),$$
%$$|g(y)|^2+2y\cdot k(y)\leq K_2(1+|y|^2).$$

\begin{lemma}\label{lemma 3.1} Let $X_t$ and $Y_t$ be two jump-diffusion processes defined by the SDEs with respect to $(\Omega, \mathcal{F}, (\mathcal{F}_t)_{t\geq0}, \mathds{P})$
\begin{eqnarray}\label{Equation-x}
dX_t=f(X_{t-})dt+g(X_{t-})dB_t+\int_{|x|<1}x\tilde{N}(dt,dx),
\end{eqnarray}
\begin{eqnarray}\label{Equation-y}
dY_t=k(Y_{t-})dt+g(Y_{t-})dB_t+\int_{|x|<1}x\tilde{N}(dt,dx),
\end{eqnarray}
where the driven L\'{e}vy process with characteristic triplet $(0,0,\nu)$ satisfies $\int_{|\xi|<1}\xi\nu(d\xi)<\infty$. Here, $X_s=Y_s=x_0\in \mathbb{R}$, $t\in\tau$; $f,k,g\in C^1(\mathbb{R})$ satisfy condition ${\bf C2}$. %$E[exp(\frac{1}{2}\int_s^u(a(Y_{t-}))^2dt)]<\infty$
Then we have $\mu_X\sim\mu_Y$ and the Radon-Nikodym derivative of $\mu_X$ with respect to $\mu_Y$ is given by
\begin{equation}\label{RND1}
\frac{d\mu_X}{d\mu_Y}[Y_t(\omega)]=\exp\{\int_s^u a(Y_{t-})dB_t-\frac{1}{2}\int_s^u(a(Y_{t-}))^2dt\},
\end{equation}
where
\begin{equation}\label{a}
a(x)=\frac{f(x)-k(x)}{g(x)}.
\end{equation}
\end{lemma}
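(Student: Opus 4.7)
The plan is to invoke a Girsanov-type change of measure for jump-diffusion processes, exploiting the fact that $X_t$ and $Y_t$ differ only through their drifts while sharing the same diffusion coefficient $g$ and the same (small-jump) compensated Poisson term. Since $a(x)=(f(x)-k(x))/g(x)$ absorbs the drift discrepancy into the Brownian part, the idea is to tilt the measure $\mathds{P}$ so that, under the new measure, the process $Y_t$ solves the SDE satisfied by $X_t$.

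Concretely, I would first introduce the candidate Dol\'eans-Dade exponential
\begin{equation*}
M_t=\exp\!\left\{\int_s^t a(Y_{r-})\,dB_r-\tfrac12\int_s^t a(Y_{r-})^2\,dr\right\},\qquad t\in\tau,
\end{equation*}
and verify that it is a true $(\mathcal{F}_t,\mathds{P})$-martingale with $\mathds{E}M_u=1$. Because $Y_t$ is the strong solution of an SDE with coefficients satisfying conditions \textbf{C1}--\textbf{C2}, one has moment control on $Y_{t-}$; together with the $C^1$ assumption on $f,g,k$ and condition \textbf{C2} on $g$, a Novikov-type or localization argument (exhausting by stopping times $T_R=\inf\{t:|Y_t|\ge R\}$ and passing to the limit) yields the martingale property. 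This is the main technical obstacle, since $a$ need not be bounded globally, so some care is required to justify integrability; I would state this as the place where the structural hypotheses on $f,g,k$ are really used.

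Next, define $d\tilde{\mathds{P}}=M_u\,d\mathds{P}$ on $\mathcal{F}_u$. By the Girsanov theorem for semimartingales with jumps (see e.g.\ Applebaum), under $\tilde{\mathds{P}}$ the process
\begin{equation*}
\tilde B_t=B_t-\int_s^t a(Y_{r-})\,dr
\end{equation*}
is a standard Brownian motion, while the compensated small-jump martingale measure $\tilde N(dt,dx)$ is preserved (the Radon--Nikodym density involves only the Brownian integrand $a(Y_{r-})$, so the jump intensity $\nu$ is unchanged). Substituting $dB_t=d\tilde B_t+a(Y_{t-})\,dt$ into \eqref{Equation-y} and using $g(Y_{t-})a(Y_{t-})=f(Y_{t-})-k(Y_{t-})$ gives
\begin{equation*}
dY_t=f(Y_{t-})\,dt+g(Y_{t-})\,d\tilde B_t+\int_{|x|<1}x\,\tilde N(dt,dx),\qquad Y_s=x_0,
\end{equation*}
which is precisely the SDE \eqref{Equation-x} defining $X_t$. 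By strong uniqueness (again thanks to \textbf{C1}--\textbf{C2}), the law of $Y$ under $\tilde{\mathds{P}}$ coincides with the law of $X$ under $\mathds{P}$, i.e.\ $\mu_X=\tilde{\mathds{P}}\circ Y^{-1}$ on $\mathcal{P}$.

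Finally, I would unwind the definitions to read off the Radon--Nikodym derivative. For any $A\in\mathcal{P}$,
\begin{equation*}
\mu_X(A)=\tilde{\mathds{P}}(Y\in A)=\mathds{E}_{\mathds{P}}\!\left[M_u\,\mathbf{1}_{\{Y\in A\}}\right]=\int_A M_u[Y(\omega)]\,d\mu_Y,
\end{equation*}
which identifies $d\mu_X/d\mu_Y$ as the functional of $Y$ given by \eqref{RND1} with $a$ as in \eqref{a}. Since $M_u>0$ almost surely, the derivative is strictly positive, and interchanging the roles of $f$ and $k$ (the hypotheses being symmetric) yields $\mu_Y\ll\mu_X$ as well; hence $\mu_X\sim\mu_Y$. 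The hard part throughout is really the martingale/integrability verification in step one; once that is in hand, the Girsanov substitution and the uniqueness of strong solutions do the rest.
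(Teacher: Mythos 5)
Your proposal is correct and follows essentially the same route as the paper: the exponential martingale $M_t$, the Girsanov change of measure preserving the compensated Poisson integral, uniqueness in distribution to identify the law of $Y$ under the new measure with $\mu_X$, and unwinding definitions to read off the density. Your added remarks on verifying the martingale property via localization and on obtaining $\mu_Y\ll\mu_X$ by symmetry are points the paper leaves implicit, but they do not change the argument.
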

\begin{proof}Introduce
 $$M_t=\exp\{\int_s^t a(Y_{r^-})dB_r-\frac{1}{2}\int_s^t a^2(Y_{r^-})dr\};   \;\;\;t\leq u$$
 and
$$dQ(\omega)=M_u(\omega)d\mathds{P}(\omega). $$
According to Girsanov transformation \cite[Page 23 Theorem 1.4] {Ishikawa}, $Q$ is a probability measure, the process $\hat{B_t}:=-\int_s^t a(Y_{r^-})dr+B_t$ is a Brownian motion with respect to $Q$, $\tilde{N}(dt,dx)$ is still the compensated Poisson random measure  with jump measure $\nu$ with respect to $Q$, and in terms of $\hat{B_t}$ the process $Y_t$ has the stochastic integral representation
\begin{equation}\label{Equation-yQ}
dY_t=f(Y_{t-})dt+g(Y_{t-})d\hat{B_t}+\int_{|x|<1}x\tilde{N}(dt,dx).
\end{equation}
Thus equation (\ref{Equation-yQ}) is the SDE of $Y_t$ with respect to $Q$. According to the uniqueness in distribution \cite [Page 410] {Ap}, we have the equality of $\mu_X$ and $\mu_Y^Q$ induced by $(Y_t,Q)$ on $\mathcal{P}$, i.e., $\mu_X=\mu_Y^Q$.\par
\noindent For every $B\in\mathcal{P}$, we have:
\begin{align}
  \mathds{P}(\{\omega\mid X_t(\omega)\in B\})
&=\mu_X(B)=\mu_Y^Q(B)=Q(\{\omega\mid Y_t(\omega)\in B\})\notag\\
&=\int_{\{\omega: Y_t(\omega)\in B\}}\frac{dQ}{d\mathds{P}}(\omega)d\mathds{P}(\omega)=\int_B\frac{dQ}{d\mathds{P}}(\omega)d\mu_Y(Y_t(\omega))\textcolor[rgb]{1,0,0}{,}%\label{用这个！}
\end{align}

\noindent because $Q$ is by definition absolutely continuous with respect to $P$. So
$$\frac{d\mu_X}{d\mu_Y}[Y_t(\omega)]=\frac{dQ}{d\mathds{P}}(\omega)=M_u=\exp\{\int_s^u a(Y_{t-}(\omega))dB_t(\omega)-\frac{1}{2}\int_s^u(a(Y_{t-}(\omega)))^2dt\}.$$
 The proof is complete.
 \end{proof}

%Lemma \ref{lemma 3.1} states that if the diffusion coefficient and driven noises of two SDE are same, then the distributions induced by the two solution process starting from same initial point are mutual absolutely continuous.\par
Now we express the Radon-Nikodym derivative given by (\ref{RND1}) in terms of a path integral. A key point is to transform the stochastic integral in (\ref{RND1}) using the It\^o formula \cite[ Page 251-255] {Ap}. For simplicity, consider $g(x)=c$.\par

Setting potential function
\begin{equation}\label{Va}
V(x)=\frac{1}{c}\int^xa(y)dy.
\end{equation}
Notice that jump measure $\nu$ satisfies $\int_{|\xi|<1}\xi\nu(d\xi)<\infty$ under our assumptions.\par
%Notice that if $0<\alpha<1$, then $\int_{|\xi|<1}|\xi|\nu_{\alpha, \beta}(d\xi)<\infty$. Thus, $\int_{|\xi|<1}\xi\nu_{\alpha, \beta}(d\xi)<\infty$. In fact, $\int_{|\xi|<1}\xi\nu_{\alpha, \beta}(d\xi)=\frac{\alpha}{\Gamma(2-\alpha)\cos(\frac{\pi\alpha}{2})}\beta$ when $0<\alpha<1$.\par
By using It\^o formula for stochastic integrals, we get an expression $F[y(t)]$ for (\ref{RND1}) (see Appendix), which illuminates the functional property of the Radon-Nikodym derivative on $D_\tau^{x_0}$ ($y(t)\in D_\tau^{x_0}$)
\begin{align}\label{F}
F[y(t)]=&\frac{d\mu_X}{d\mu_Y}[y(t)]\notag\\
=&\exp\{V(y(u))-V(x_0)-\frac{1}{2}\int_s^ub(y(t-))dt+\int_s^u\int_{|\xi|<1}\frac{\xi}{c}a(y(t-))\nu(d\xi)dt,
\notag\\
&-\Sigma_{s\leq t\leq u}[V(y(t))-V(y(t-))]\chi_{|\xi|<1}(\Delta y(t))\}
\end{align}
where
\begin{eqnarray}\label{b}
b(y(t-))=\{a(y(t-))\}^2+\frac{2}{c}a(y(t-))k(y(t-))+c\frac{da(x)}{dx}(y(t-)).
\end{eqnarray}\par

\begin{remark}\label{remark 3.1}
In the proof of Lemma \ref{lemma 3.1}, one trick has been used: By the Girsanov transformation, the $B_t$ term will be a new Brownian term while the $\tilde N(dt,dx)$ term remains in the sense of new measure due to the independence of $B_t$ and $N(t,\cdot)$. However, the mutual absolute continuity for measures induced by SDEs driven by a pure jump L\'evy motion could not be established, since the uniqueness in distribution cannot be obtained via Girsanov transform; see also \cite{XCS}. We should also note that in order to express  stochastic integral terms with respect to $N(dt,dx)$ into a path integral based on the definition of Poisson random measure, we make an important hypothesis: the jump measure $\nu$ satisfies $\int_{|\xi|<1}\xi\nu(d\xi)<\infty$ (see Appendix).
\end{remark}
%We should also note that the right hand side of (\ref{RND1}) includes stochastic integral terms with respect to $N(dt,dx)$. In order to express this term into a path integral based on the definition of Poisson random measure, we make an important hypothesis: the jump measure $\nu$ satisfies $\int_{|\xi|<1}\xi\nu(d\xi)<\infty$ (see Appendix).

\subsection{Quasi-translation invariant measure}

%The second preparation below is to introduce the quasi translation invariant measure which is crucial for shifting the domain of integration.\par
Recall that a measure $L$ is translation invariant on $\mathbb{R}^n$ if for each $E\in\mathcal{B}^n$, then $\mathcal{T}E\in\mathcal{B}^n$ and $L(\mathcal{T}E)=L(E)$ where $\mathcal{T}$ is a translation on $\mathbb{R}^n$ and $\mathcal{B}^n$ denotes the borel-$\sigma$-field on $\mathbb{R}^n$. As suggested in \cite{KHH, DB}, such a translation invariant measure $\mu_X$ does not exist in function space $D_\tau^{x_0}$. In fact, a weaker concept, i.e. quasi-translation invariant measure is adequate for our setting which is defined as follows \cite{DB} :\par
\begin{definition}\label{definition 3.1} Let $\mathcal{T}$ be a transformation $\mathcal{T}: D_\tau^{x_0}\rightarrow D_\tau^{x_0}$ such that
\begin{eqnarray}\label{t1}
\mathcal{T}x\rightarrow x+z_0,
\end{eqnarray}
where $z_0\in D_\tau^0$, $z_0$ is differentiable. $\mathcal{T}$ is measurable on $\mathcal{M}=\{A\in \mathcal{P}: \mathcal{T}^{-1}A\in \mathcal{P}\}$. Consider the jump-diffusion process $X_t$ and
\begin{eqnarray}\label{t2}
\mathcal{T}X_t=X_t+z_0(t).
\end{eqnarray}
If the induced measure $\mu_X\sim\mu_{\mathcal{T}X}$ on $\mathcal{M}$, $\mu_X$ and $\mu_{\mathcal{T}X}$ will be called quasi-translation invariant.
\end{definition}

%If the induced measure $\mu_X|_{\mathcal{M}}$ (the restriction of $\mu_X$ to ${\mathcal{M}}$) $\sim\mu_{TX}|_{\mathcal{M}}$, $\mu_X$ and $\mu_{TX}$ will be called q.t.i.}\par
\begin{remark}\label{remark 3.2}
The translation $\mathcal{T}$ on $D_\tau^{x_0}$ may not be measurable due to the fact that the projection $\sigma$-field $\mathcal{P}$ is not the larger Borel $\sigma$-field. To guarantee the measurability, we consider a translation restricted to a subset of $\mathcal{P}$, i.e. $\mathcal{M}$. Therefore, the quasi-translation invariant measure is well-defined. In fact, $\mathcal{M}$ is not empty. For example, if the translation $\mathcal{T}$ given by (\ref{t1}) is defined by $z_0(t)$, then we have $\mathcal{T}^{-1}(K(z,\epsilon))=K(x_0,\epsilon)$ which belongs to $\mathcal{P}$.
\end{remark}

\begin{lemma}\label{lemma 3.2} The stochastic differential equation:
\begin{equation}\label{Equation-xm1}
dX_t=f(X_{t-})dt+c\;dB_t+\int_{|x|<1}x\tilde{N}(dt,dx)
\end{equation}
  induces  a quasi-translation invariant measure $\mu_X$ and
\begin{equation}\label{RND2}
\frac{d\mu_{\mathcal{T}X}}{d\mu_X}[X_t(\omega)]=\exp\{\int_s^u a_X(X_{t-},z_0(t))dB_t-\frac{1}{2}\int_s^u(a_X(X_{t-},z_0(t)))^2dt\}
\end{equation}
where
\begin{equation}\label{aX}
a_X(x,z_0)=\frac{f(x-z_0)+\dot{z_0}-f(x)}{c}.
\end{equation}
\end{lemma}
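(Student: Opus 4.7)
The plan is to construct the Girsanov change of measure directly, in the spirit of the proof of Lemma~\ref{lemma 3.1}, so that $\mu_{\mathcal{T}X}$ appears as the push-forward of a new probability $Q$ by the solution map $\omega\mapsto X_{\cdot}(\omega)$. First I introduce
\begin{equation*}
M_t=\exp\Bigl\{\int_s^t a_X(X_{r^-},z_0(r))\,dB_r-\tfrac{1}{2}\int_s^t a_X(X_{r^-},z_0(r))^2\,dr\Bigr\},\quad t\le u,
\end{equation*}
and set $dQ(\omega)=M_u(\omega)\,d\mathds{P}(\omega)$. Once $M_t$ is shown to be a genuine martingale, the Girsanov theorem for jump-diffusions \cite[Thm.~1.4]{Ishikawa} yields that $\hat B_t:=B_t-\int_s^t a_X(X_{r^-},z_0(r))\,dr$ is a standard Brownian motion under $Q$, while $\tilde N(dt,dx)$ remains the compensated Poisson random measure with jump measure $\nu$.

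Next I rewrite the SDE (\ref{Equation-xm1}) in terms of $\hat B$. Substituting $dB_t=d\hat B_t+a_X(X_{t^-},z_0(t))\,dt$ and unpacking the definition (\ref{aX}) of $a_X$, the drift of $X_t$ under $Q$ becomes $f(X_{t^-})+c\,a_X(X_{t^-},z_0(t))=f(X_{t^-}-z_0(t))+\dot z_0(t)$. Setting $W_t:=X_t-z_0(t)$, the $\dot z_0(t)\,dt$ contribution cancels, and since $z_0(s)=0$ and $z_0$ is continuous,
\begin{equation*}
dW_t=f(W_{t^-})\,dt+c\,d\hat B_t+\int_{|x|<1}x\,\tilde N(dt,dx),\qquad W_s=x_0.
\end{equation*}
Thus $W_t$ under $Q$ satisfies exactly the SDE (\ref{Equation-xm1}) satisfied by $X_t$ under $\mathds{P}$, so by uniqueness in distribution \cite[p.~410]{Ap} the law of $X_{\cdot}=W_{\cdot}+z_0(\cdot)$ under $Q$ coincides with the law of $X_{\cdot}+z_0(\cdot)$ under $\mathds{P}$; equivalently, the push-forward $\mu_X^Q$ equals $\mu_{\mathcal{T}X}$ on $\mathcal{M}$.

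The Radon--Nikodym derivative is then read off exactly as in Lemma~\ref{lemma 3.1}: for every $B\in\mathcal{M}$,
\begin{equation*}
\mu_{\mathcal{T}X}(B)=Q(\{X_{\cdot}\in B\})=\int_{\{X_{\cdot}\in B\}}M_u\,d\mathds{P}=\int_B M_u\,d\mu_X,
\end{equation*}
whence $\frac{d\mu_{\mathcal{T}X}}{d\mu_X}[X_t(\omega)]=M_u(\omega)$, which is exactly (\ref{RND2}). Positivity of $M_u$ yields the equivalence $\mu_X\sim\mu_{\mathcal{T}X}$ on $\mathcal{M}$, so $\mu_X$ is quasi-translation invariant in the sense of Definition~\ref{definition 3.1}.

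The main obstacle I expect is justifying that $M_t$ is a true (not merely local) martingale so that the Girsanov step is legitimate. Because $f\in C^1$ need not be globally bounded, a direct Novikov estimate on $a_X(X_{t^-},z_0(t))$ may fail; I would localize by stopping times $\tau_n=\inf\{t\ge s:|X_t|\ge n\}$, apply Girsanov on each $[s,u\wedge\tau_n]$, and pass to the limit using uniform boundedness of $\dot z_0$ on $[s,u]$ together with the non-explosion of $X_t$ guaranteed by the one-sided linear growth assumption~{\bf C2}.
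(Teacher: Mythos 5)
Your proof is correct and takes essentially the same route as the paper: the paper simply observes that $\mathcal{T}X_t$ satisfies the SDE with drift $f(\cdot-z_0(t))+\dot z_0(t)$, unchanged diffusion and jump parts, and then invokes Lemma~\ref{lemma 3.1}. You unfold that Girsanov computation explicitly, which has the minor added benefit of honestly treating the time-dependence of the new drift and the martingale/localization issue that the paper's one-line appeal to Lemma~\ref{lemma 3.1} glosses over.
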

\begin{proof}
Take a translation (\ref{t1}). By combining equation (\ref{Equation-xm1}) and translation (\ref{t2}), we get:
\begin{eqnarray}\label{TX}
d\mathcal{T}X_t&=&\{f(X_{t-})+\dot{z}_0(t)\}dt+c\;dB_t+\int_{|x|<1}x\tilde{N}(dt,dx)\notag\\
&=&\{f(\mathcal{T}X_{t-}-z_0(t))+\dot{z}_0(t)\}dt+c\;dB_t+\int_{|x|<1}x\tilde{N}(dt,dx).
\end{eqnarray}
As the diffusion and jump have not changed under $T$, we have by Lemma \ref{lemma 3.1} $\mu_{\mathcal{T}X}\sim\mu_X$ on $\mathcal{M}$ and the Radon-Nikodym derivative of $\mu_{\mathcal{T}X}$
with respect to $\mu_X$ is given by (\ref{RND2}). The proof is complete.
\end{proof}
\begin{remark}\label{remark 3.3}
The index X indicates that the term belongs to the Radon-Nikodym derivative of a measure $\mu_{\mathcal{T}X}$ with respect to $\mu_X$; if we replace $z_0$ by $-z_0$ in these functionals, they refer to the Radon-Nikodym derivative of $\mu_{\mathcal{T}^{-1}X}$ with respect to $\mu_X$. If $g(x)$ in (\ref{Equation-main}) is not constant, then Lemma \ref{lemma 3.2} is not valid. In other words, the induced measure of (\ref{Equation-main}) with $g(x)\neq c$ is not a quasi-translation invariant measure. Thus, in Section 4, we only compute the OM function for (\ref{Equation-main}) with $g(x)=c$.
\end{remark}
Taking the same procedure as above, to eliminate the stochastic integral appeared in (\ref{RND2}), we have to take into account that $V_X$, given by (\ref{VX}), is a function explicitly depending on time,
\begin{equation}\label{VX}
V_X(x,z_0)=\frac{1}{c}\int^xa_X(y,z_0(t))dy.
\end{equation}
By using It\^o formula, we get the following expression for $J_X[x(t),z_0(t)]$ (see Appendix)
\begin{align}\label{F2}
J_X[x(t),z_0(t)]
=&\frac{d\mu_{\mathcal{T}X}}{d\mu_X}[x(t)]\notag\\
=&\exp\{V_X(x(u),z_0(u))-V_X(x_0,z_0(s))-\frac{1}{2}\int_s^ud_X(x(t-),z_0(t))dt\notag\\
&+\int_s^u\int_{|\xi|<1}\frac{\xi}{c}a_X(x(t-),z_0(t))\nu(d\xi)dt\notag\\
&-\Sigma_{s\leq t \leq u}[V_X(x(t),z_0(t))-V_X(x(t-),z_0(t))]\chi_{|\xi|<1}(\Delta x(t))\},
\end{align}
where
\begin{align}\label{dX}
d_X(x(t-),z_0(t))
=&2\frac{\partial V_X(x,z_0(t))}{\partial t}\mid_{x=x(t-)}+\frac{2}{c}a_X(x(t-),z_0(t))f(x(t-))\notag\\
&+c\frac{\partial a_X(x,z_0(t))}{\partial t}\mid_{x=x(t-)}+(a_X(x(t-),z_0(t)))^2.
\end{align}

The following Lemma will show the importance of the quasi-translation invariant measures introduced above.
\begin{lemma}\label{lemma 3.3} If we take the translation (\ref{t1}) and write $\mathcal{M}$ for $\{A\in \mathcal{P}: \mathcal{T}^{-1}A\in \mathcal{P}\}$, then $\mathcal{T}$ is measurable on $\mathcal{M}$. If $\Phi(x)$ is a measurable functional on $D_\tau^{x_0}$ and $\mu_X$ is a quasi-translation invariant measure, the following equation holds on $\mathcal{M}$
\begin{eqnarray}\label{y}
\int_{A}\Phi(y)d\mu_X(y)=\int_{\mathcal{T}^{-1}A}\Phi(x+z_0)J_X[x,-z_0]d\mu_X(x).
\end{eqnarray}
\end{lemma}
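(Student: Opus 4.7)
The plan is to combine the pushforward change-of-variables formula with the Radon--Nikodym derivative delivered by Lemma \ref{lemma 3.2}, but applied to the reverse translation $x\mapsto x-z_0$ as described in Remark \ref{remark 3.3}.

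First I would verify the measurability needed to make both sides of (\ref{y}) meaningful. By the definition of $\mathcal{M}$, for any $A\in\mathcal{M}$ the set $\mathcal{T}^{-1}A$ belongs to $\mathcal{P}$, so the right-hand side of (\ref{y}) is well defined provided the integrand is $\mathcal{P}$-measurable. The integrand $\Phi(x+z_0)\,J_X[x,-z_0]$ decomposes as a composition of $\Phi$ (measurable by hypothesis) with the shift by $z_0$, together with the explicit path functional $J_X[\,\cdot\,,-z_0]$ built from (\ref{F2}); each of these can be checked to be $\mathcal{P}$-measurable on $\mathcal{M}$.

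Next I would exploit the quasi-translation invariance of $\mu_X$. By Lemma \ref{lemma 3.2} and Remark \ref{remark 3.3},
\begin{equation*}
\frac{d\mu_{\mathcal{T}^{-1}X}}{d\mu_X}[x]=J_X[x,-z_0],
\end{equation*}
so $d\mu_{\mathcal{T}^{-1}X}(x)=J_X[x,-z_0]\,d\mu_X(x)$ on $\mathcal{M}$. Moreover $\mathcal{T}(\mathcal{T}^{-1}X)=X$, hence $\mu_X$ is the pushforward of $\mu_{\mathcal{T}^{-1}X}$ under $\mathcal{T}$, i.e.\ $\mu_X(A)=\mu_{\mathcal{T}^{-1}X}(\mathcal{T}^{-1}A)$ for every $A\in\mathcal{M}$. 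The standard change-of-variables formula for pushforward measures then gives
\begin{equation*}
\int_A\Phi(y)\,d\mu_X(y)=\int_{\mathcal{T}^{-1}A}\Phi(\mathcal{T}x)\,d\mu_{\mathcal{T}^{-1}X}(x)=\int_{\mathcal{T}^{-1}A}\Phi(x+z_0)\,J_X[x,-z_0]\,d\mu_X(x),
\end{equation*}
which is exactly (\ref{y}).

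The main delicate point is not a computational one but a measure-theoretic one: as emphasised in Remarks \ref{Remark 2.1} and \ref{remark 3.2}, $\mathcal{P}$ is only the projection $\sigma$-field and not the full Borel $\sigma$-field under the uniform norm, so translations on $D_\tau^{x_0}$ need not be $\mathcal{P}$-measurable globally. Restricting to $\mathcal{M}$ is precisely what makes both the pushforward identity $\mu_X=\mathcal{T}_{*}\mu_{\mathcal{T}^{-1}X}$ and the Radon--Nikodym substitution valid inside integrals over $\mathcal{T}^{-1}A$; once these two ingredients are aligned, (\ref{y}) follows in one line. I would therefore spend most of the write-up isolating the measurability verification and then compress the remainder into the two-step substitution above.
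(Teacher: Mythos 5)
Your argument is correct and follows essentially the same route as the paper's own proof: both rest on the identity $J_X[x,-z_0]\,d\mu_X(x)=d\mu_{\mathcal{T}^{-1}X}(x)$ from Lemma \ref{lemma 3.2} together with the pushforward relation $\mu_{\mathcal{T}^{-1}X}(\mathcal{T}^{-1}A)=\mu_X(A)$, and then conclude by the standard change-of-variables for image measures. Your additional attention to the $\mathcal{P}$-measurability of the integrand on $\mathcal{M}$ is a welcome clarification of a point the paper leaves implicit, but it does not change the substance of the argument.
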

\begin{proof} Similar to the proof by \cite{DB}. %This theorem is still valid for function space  $D_\tau^{x_0}$. We omit the details  here.\hfill{$\blacksquare$}\par
Based on the definition of $J_X[x,-z_0]$ as Radon-Nikodym derivative of $\mu_{\mathcal{T}^{-1}X}$ with respect to $\mu_X$, we have
\begin{eqnarray}
J_X[x,-z_0]d\mu_X(x)=d\mu_{\mathcal{T}^{-1}X}(x).
\end{eqnarray}
Now for every $A\in \mathcal{M}$,
\begin{align}
\mu_{\mathcal{T}^{-1}X}(\mathcal{T}^{-1}A)=P(\{\omega:\mathcal{T}^{-1}X_t(\omega)\in \mathcal{T}^{-1}A\})=P(\{\omega:X_t(\omega)\in A\}=\mu_X(A)
\end{align}
holds, which yields (\ref{y}).
\end{proof}

\renewcommand{\theequation}{\thesection.\arabic{equation}}
\setcounter{equation}{0}

\section{The Onsager-Machlup function for a jump-diffusion process}
In the preceding section, we have set up the fundamental tools needed in this section. Now let us calculate the OM function for the following stochastic differential equation
\begin{equation}\label{Equation-xmm}
dX_t=f(X_{t-})dt+c\;dB_t+\int_{|x|<1}x\tilde{N}(dt,dx),X_s=x_0\in\mathbb{R},
\end{equation}
which is a particular case of system (\ref{Equation-main})  for $g(x)=c$. We assume that $f\in C^1(\mathbb{R})$ and satisfies condition {\bf C2}. \par

Our main result about the expression of the OM function for a jump-diffusion process is present in the following theorem.\par

\begin{theorem}\label{theorem 4.1}For a class of stochastic systems in the form of (\ref{Equation-xmm}) with the jump measure satisfying $\int_{|\xi|<1}\xi \nu(d\xi)<\infty$, the Onsager-Machlup function is given, up to an additive constant, by:
\begin{eqnarray}\label{OM}
OM(\dot{z},z)=(\frac{\dot{z}-f(z)}{c})^2+f'(z)+2\frac{\dot{z}-f(z)}{c^2}\int_{|\xi|<1}\xi \nu(d\xi),
\end{eqnarray}
where $z(t)\in D_\tau^{x_0}$ is a differentiable function.
The contribution of pure jump L\'evy noise to the OM function is the  third term.
When the jump measure is absent, we cover the OM function for the case of diffusions.
\\
In terms of OM function, the measure of tube $K(z,\epsilon)$ defined as $\{x\in D_\tau^{x_0}\mid z\in D_\tau^{x_0},\|x-z\|\leq\epsilon,\epsilon>0\}$ can be approximated as follows:
\begin{eqnarray}\label{OM3}
\mu_X(K(z,\epsilon))\propto\mu_{W^c}(K(x_0,\epsilon))\exp\{-\frac{1}{2}\int_s^uOM(\dot{z},z)dt\},
\end{eqnarray}
where symbol $\propto$ denotes the equivalence relation for $\epsilon$ small enough. Here, the $W_t^c$ is defined by
\begin{eqnarray}\label{W}
dW_t^c=cdB_t+\int_{|x|<1}x\tilde{N}(dt,dx).
\end{eqnarray}
\end{theorem}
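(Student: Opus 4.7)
The plan is to adapt the D\"urr--Bach approach to the jump-diffusion case using the three tools of Section 3. Let $Y_t := x_0 + W_t^c$, the $k\equiv 0$ specialization of (\ref{Equation-y}). By Lemma \ref{lemma 3.1} with $a(x) = f(x)/c$ one has $\mu_X(K(z,\epsilon)) = \int_{K(z,\epsilon)} F[y]\,d\mu_Y(y)$, where $F[y]$ is the path functional from (\ref{F}). Since $z\in D_\tau^{x_0}$ is differentiable, so is $z_0 := z - x_0 \in D_\tau^0$, and the translation $\mathcal{T}y = y + z_0$ sends $K(x_0,\epsilon)$ to $K(z,\epsilon)$. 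The measure $\mu_Y$ is quasi-translation invariant ($f\equiv 0$ in Lemma \ref{lemma 3.2}), so applying Lemma \ref{lemma 3.3} to $\mu_Y$ with $\Phi = F$ gives $\mu_X(K(z,\epsilon)) = \int_{K(x_0,\epsilon)} F[x+z_0]\,J_Y[x,-z_0]\,d\mu_Y(x)$, where $J_Y$ is (\ref{F2}) specialized to $f\equiv 0$. I would then argue that as $\epsilon\downarrow 0$ the integrand collapses to its value $F[z]\,J_Y[x_0,-z_0]$ on the central paths, producing $\mu_X(K(z,\epsilon)) \propto F[z]\,J_Y[x_0,-z_0]\cdot\mu_Y(K(x_0,\epsilon))$, with $\mu_Y(K(x_0,\epsilon)) = \mu_{W^c}(K(x_0,\epsilon))$ by the deterministic shift $y\mapsto y-x_0$.

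Granting this localization, the OM function falls out of an explicit computation of the exponent of $F[z]\cdot J_Y[x_0,-z_0]$. With $V(x) = \frac{1}{c^2}\int^x f(y)\,dy$, since $z$ is continuous the jump sum in $F[z]$ vanishes, the boundary piece yields $V(z(u)) - V(x_0) = \frac{1}{c^2}\int_s^u f(z)\,\dot z\,dt$ by the chain rule, and the Brownian drift piece is $-\frac{1}{2}\int_s^u b(z)\,dt = -\frac{1}{2}\int_s^u[(f(z)/c)^2 + f'(z)]\,dt$. From $J_Y[x_0,-z_0]$, using $a_Y(x,-z_0) = -\dot z_0/c = -\dot z/c$, the analogous piece contributes $-\frac{1}{2}\int_s^u(\dot z/c)^2\,dt$ modulo endpoint boundary terms (absorbed into the additive constant of the OM action, as permitted by the theorem). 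Collecting, the non-compensator contributions sum to $-\frac{1}{2}\int_s^u[((\dot z-f(z))/c)^2 + f'(z)]\,dt$. Finally, the two compensator terms combine into $-\frac{1}{c^2}\int_s^u(\dot z - f(z))\,dt\cdot\int_{|\xi|<1}\xi\,\nu(d\xi)$, which is exactly $-\frac{1}{2}$ times the L\'evy term $\frac{2(\dot z-f(z))}{c^2}\int\xi\,\nu(d\xi)$ in (\ref{OM}), completing the match.

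The main obstacle is justifying the localization claim $F[x+z_0]\,J_Y[x,-z_0] \to F[z]\,J_Y[x_0,-z_0]$ as $\epsilon\to 0$. The Riemann integrals (of $b$, $d_Y$, and of the compensators) are uniformly continuous in the sup-norm, giving $O(\epsilon)$ exponential errors that tend to $1$, and the boundary $V$- and $V_Y$-terms are directly continuous. The delicate piece is the random jump sum $\sum_{s\leq t\leq u}[V(y(t))-V(y(t-))]\chi_{|\xi|<1}(\Delta y(t))$ in $F[y]$, which has no continuous analogue on $z$. Writing it as $\int_s^u\int_{|\xi|<1}[V(y(t-)+\xi)-V(y(t-))]\,N(dt,d\xi)$ and Taylor-expanding across small jumps yields a linear-in-$\xi$ contribution plus an $O(\xi^2)$ remainder. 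Under the hypothesis $\int_{|\xi|<1}\xi\,\nu(d\xi) < \infty$, the compensated form of the linear piece (using $V' = a/c$) cancels exactly against the $\int\int\frac{\xi}{c}a(y)\,\nu(d\xi)\,dt$ term already in $F[y]$, leaving only a mean-zero $\tilde N$-martingale plus an $O(\int\xi^2\,\nu)$ remainder (finite for any L\'evy measure); neither survives in the small-$\epsilon$ asymptotics. This structural cancellation is precisely why the extra $\int\xi\,\nu$ term appears inside $OM$ rather than as a harmless constant in the exponent.
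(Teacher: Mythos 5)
Your proposal is correct and follows essentially the same route as the paper: the chain Lemma \ref{lemma 3.1} $\to$ Lemma \ref{lemma 3.2} $\to$ Lemma \ref{lemma 3.3}, translation of the tube to $K(x_0,\epsilon)$, localization of the integrand $F\cdot J_Y$ to the central path, and an explicit evaluation of the exponent; your only deviation is fixing the reference process as $Y=x_0+W^c$ (i.e.\ $k\equiv 0$) from the outset, which streamlines the bookkeeping and lands directly on the form (\ref{OM3}), whereas the paper keeps a general $k$ and then discards the $k$-dependent additive constant. Your handling of the jump-sum/compensator cancellation is, if anything, more explicit than the paper's Taylor-expansion argument, and both treatments assert rather than fully prove that the residual $\tilde N$-martingale term is negligible as $\epsilon\to 0$.
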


\begin{proof} Following Definition \ref{Definition 2.1}, we have to consider
\begin{equation}
\mu_X(K(z,\epsilon))=\int_{K(z,\epsilon)}d\mu_X(x).
\end{equation}
For every differentiable function $z(t)\in D_\tau^{x_0}$, we can find a function $z_0(t)\in D_\tau^{0}$ as in (\ref{t1}) such that
\begin{equation}\label{Equation-xm}
z(t)=x_0+z_0(t),\dot{z}(t)=\dot{z_0}(t).
\end{equation}
Thus, the translation $\mathcal{T}$ given by (\ref{t1}) is defined by this $z_0(t)$, and we have
\begin{equation}
\mathcal{T}^{-1}(K(z,\epsilon))=K(x_0,\epsilon).
\end{equation}\par
Note that each jump-diffusion process $Y_t$ with diffusion $c$ and initial value $Y_s=x_0$ induces a quasi-translation invariant measure $\mu_Y$ with $\mu_X\sim\mu_Y$, according to Lemma \ref{lemma 3.1} and Lemma \ref{lemma 3.2}. Hence combining (\ref{F}) and (\ref{y}), we get
\begin{align}\label{hj}
\mu_X(K(z,\epsilon))=\int_{K(z,\epsilon)}F[y]d\mu_Y(y)=\int_{K(x_0,\epsilon)}F[x+z_0]J_Y[x,-z_0]d\mu_Y(x)
\end{align}
as $K(z,\epsilon)\in \mathcal{M} \subset \mathcal{P}.$
In particular, as $\mu_X$ is already quasi-translation invariant by Lemma \ref{lemma 3.2}, we may take $\mu_X$ instead of $\mu_Y$ in (\ref{hj}). Then $F=1$ and instead of (\ref{hj}), we obtain
\begin{eqnarray}\label{hjj}
\mu_X(K(z,\epsilon))=\int_{K(x_0,\epsilon)}J_X[x,-z_0]d\mu_X(x).
\end{eqnarray}

To shift the integration domain to $K(0,\epsilon)$, we further define
\begin{eqnarray}
Y_t^0=Y_t-x_0
\end{eqnarray}
and denote by $\mu_{Y^0}$ the measure induced on $D_\tau^{0}$.\par

\noindent Applying relation  (\ref{y}) again with translation parameter $x_0$ yields
\begin{eqnarray}\label{hjf}
\mu_X(K(z,\epsilon))=\int_{K(0,\epsilon)}F[y+z]J_Y[y+x_0,-z_0]d\mu_{Y^0}(y).
\end{eqnarray}
By combining (\ref{F}) and (\ref{F2}) we can write for the integrand of (\ref{hjf})
\begin{align}\label{FJ}
&F[y+z]J_Y[y+x_0,-z_0]\notag\\
=&\exp\{V(y(u)+z(u))-V(x_0)-\frac{1}{2}\int_s^ub(y(t-)+z(t-))dt\notag\\
&-\Sigma_{s\leq t \leq u}[V(y(t)+z(t))-V(y(t-)+z(t-))]\chi_{|\xi|<1}(\Delta \{y(t)+z(t)\})\notag\\
&+\int_s^u\int_{|\xi|<1}\frac{\xi}{c}a(y(t-)+z(t-))\nu(d\xi)dt\}\notag\\
&\cdot\exp\{V_Y(y(u)+x_0,-z_0(u))-V_Y(x_0,-z_0(s))-\frac{1}{2}\int_s^ud_Y(y(t-)+x_0,-z_0(t))dt\notag\\
&-\Sigma_{s\leq t \leq u}[V_Y(y(t)+x_0,-z_0(t))-V_Y(y(t-)+x_0,-z_0(t))]\chi_{|\xi|<1}(\Delta y(t))\notag\\
&+\int_s^u\int_{|\xi|<1}\frac{\xi}{c}a_Y(y(t-)+x_0,-z_0(t))\nu(d\xi)dt\}.
\end{align}
It is worthwhile to mention that the integrals with respect to time variable $t$ in (\ref{FJ}) are Riemann integrals. Using Taylor expansions, these integrals can be estimated. Now we expand the exponent of (\ref{FJ}) into a Taylor series around $y(t)=0$ and $y(t-)=0$ respectively, and split off the terms of zero order.  If we choose $\epsilon$ small enough, the remaining terms can be made arbitrarily small, as for $y(t)\in K(0,\epsilon), y(t-)\in K(0,\epsilon)$
$$\|y(t)\|\leq\epsilon,\;\;\|y(t-)\|\leq\epsilon$$
holds.\par

Note that $z(t)$ is continuously differentiable. Denoting the remaining terms by $\Delta[y,z]$, we have
\begin{align}\label{FJ1}
&F[y+z]J_Y[y+x_0,-z_0]\notag\\
=&\exp(\Delta[y,z])
\cdot \exp\{V(z(u))-V(x_0)-\frac{1}{2}\int_s^ub(z(t))dt\notag\\&+\int_s^u\int_{|\xi|<1}\frac{\xi}{c}a(z(t))\nu(d\xi)dt\}\notag\\&
\exp\{V_Y(x_0,-z_0(u))-V_Y(x_0,-z_0(s))
-\frac{1}{2}\int_s^ud_Y(x_0,-z_0(t))dt
\notag\\&+\int_s^u\int_{|\xi|<1}\frac{\xi}{c}a_Y(x_0,-z_0(t))\nu(d\xi)dt\}.
\end{align}
By replacing this  (\ref{FJ1}) into (\ref{hjf}), we obtain
\begin{eqnarray}\label{approx}
\mu_X(K(z,\epsilon))=F[z]J_Y[x_0,-z_0]\int_{K(0,\epsilon)}\exp(\Delta[y,z])d\mu_{Y^0}(y).
\end{eqnarray}
Now we deal with the remaining terms by $\Delta[y,z]$ appeared in (\ref{approx}). Recall the fact that if for a functional $\Psi[y]$ on $D_\tau^{x_0}$ with $\|\Psi[y]\|\leq\gamma$, then the following relation holds
$$\int_{B}\Psi[y]d\mu(y)\leq\gamma\mu(B),\;\;B\in\mathcal{P}.$$
So we can choose an $\epsilon>0$ such that $\Delta[y,z]<\gamma$ for $\gamma\rightarrow0$. Then, by expanding the exponential in
(\ref{approx}) and neglecting terms smaller than $\gamma(\gamma\ll1)$, we can approximate (\ref{approx})
\begin{eqnarray}\label{appro}
\mu_X(K(z,\epsilon))\propto F[z]J_Y[x_0,-z_0]\mu_{Y^0}(K(0,\epsilon)).
\end{eqnarray}
Using $\mu_{Y^0}(K(0,\epsilon))=\mu_{Y}(K(x_0,\epsilon))$, we finally get
\begin{eqnarray}\label{tubePf}
\mu_X(K(z,\epsilon))\propto F[z]J_Y[x_0,-z_0]\mu_{Y}(K(x_0,\epsilon)).
\end{eqnarray}
Here, the symbol $\propto$ can be understood as being equal in the sense of $\epsilon$ small enough.
Based on the expression (\ref{tubePf}), in order to find a $z(t)$ which maximizes (\ref{tubePf}), we have to maximize the functional
\begin{eqnarray}\label{Mz}
M[z]&=&F[z]J_Y[x_0,-z_0]\notag\\
&=&\exp\{V(z(u))-V(x_0)-\frac{1}{2}\int_s^ub(z(t))dt\notag\\&&+\int_s^u\int_{|\xi|<1}\frac{\xi}{c}a(z(t))\nu(d\xi)dt\}\notag\\&&
\exp\{V_Y(x_0,-z_0(u))-V_Y(x_0,-z_0(s))
-\frac{1}{2}\int_s^ud_Y(x_0,-z_0(t))dt
\notag\\&&+\int_s^u\int_{|\xi|<1}\frac{\xi}{c}a_Y(x_0,-z_0(t))\nu(d\xi)dt\}.
\end{eqnarray}

Now let us simplify the function (\ref{Mz}) in the same spirit of diffusion process situation by \cite{DB}. $X_t$ is given by (\ref{Equation-xmm}) and let $Y_t$ be given by (\ref{Equation-y}) with $g(x)=c, h(y,x)=x$. To get (\ref{Mz}) in terms of $f(x)$ and $k(x)$, we need the following equations for which we used (\ref{a}), (\ref{Va}), (\ref{b}), (\ref{aX}), (\ref{dX}),(\ref{Equation-xm}) and the differentiability of $z(t)$
\begin{eqnarray}\label{Vf}
V(z(u))-V(x_0)=\frac{1}{c^2}\int_s^u[\dot{z}(t)\{f(z(t))-k(z(t))\}]dt,
\end{eqnarray}
\begin{eqnarray}\label{bf}
b(z(t))&=&\{\frac{f(z(t))-k(z(t))}{c}\}^2+\frac{df(x)}{dx}\mid_{x=z(t)}\notag\\&&-\frac{dk(x)}{dx}\mid_{x=z(t)}
+\frac{2}{c^2}\{f(z(t))-k(z(t))\}k(z(t)),
\end{eqnarray}
\begin{eqnarray}\label{VYf}
V_Y(x_0,-z_0(u))-V_Y(x_0,-z_0(s))=\frac{1}{c^2}\int_s^u[\dot{z}(t)k(z(t))-\ddot{z}(t)x_0\}]dt,
\end{eqnarray}
\begin{eqnarray}\label{dYf}
d_Y(x_0,-z_0(t))&=&-\frac{k^2(x_0)}{c^2}-\frac{d k(x)}{dx}\mid_{x=x_0}+\frac{{\dot{z}(t)}^2}{c^2}\notag\\&&+\frac{k^2(z(t))}{c^2}-\frac{2\ddot{z}(t)x_0}{c^2}+\frac{d k(x)}{dx}\mid_{x=z(t)}.
\end{eqnarray}
We get for $M[z]$ by combining the last four formulas
\begin{align}\label{lnM}
\ln(M[z])=&-\frac{1}{2}\int_s^u\{(\frac{f(z)-\dot{z}}{c})^2+f'(z)-(\frac{k^2(x_0)}{c^2}+k'(x_0))\}dt\notag\\&+
\int_s^u\int_{|\xi|<1}\frac{\xi}{c^2}f(z)\nu(d\xi)dt-\int_s^u
\int_{|\xi|<1}\frac{\xi}{c^2}(k(x_0)+\dot{z})\nu(d\xi)dt.
\end{align}
In accordance with Definition \ref{Definition 2.1}, we define the following OM function:
\begin{equation}\label{OMOO}
OM(\dot{z},z)=(\frac{\dot{z}-f(z)+\int_{|\xi|<1}\xi\nu(d\xi)}{c})^2+f'(z)-\{(\frac{-k(x_0)+\int_{|\xi|<1}\xi\nu(d\xi)}{c})^2+k'(x_0)\}.
\end{equation}
%OM(\dot{z},z)&=&(\frac{f(z)-\dot{z}}{c})^2+f'(z)+2\frac{\dot{z}-f(z)}{c^2}\int_{|\xi|<1}\xi\nu(d\xi)\notag\\&&-
%(\frac{k^2(x_0)}{c^2}+k'(x_0))+2\frac{k(x_0)}{c^2}\int_{|\xi|<1}\xi\nu(d\xi)
\noindent Notice that the function $z_m(t)$ that maximizes (\ref{tubePf}) must be independent of the choice of the quasi-translation invariant measure $\mu_Y$ i.e. independent of $k(z)$. This is fulfilled for our setting. And the term
$$(\frac{\int_{|\xi|<1}\xi\nu(d\xi)}{c})^2-(\frac{-k(x_0)+\int_{|\xi|<1}\xi\nu(d\xi)}{c})^2+k'(x_0)$$
%$$-(\frac{k^2(x_0)}{c^2}+k'(x_0))+2\frac{k(x_0)}{c^2}\int_{|\xi|<1}\xi\nu(d\xi)$$
is a constant and depends only on the measure chosen.
So up to an additive constant, we can take as OM function the expression
\begin{align}\label{OMP}
OM(\dot{z},z)=(\frac{\dot{z}-f(z)}{c})^2+f'(z)+2\frac{\dot{z}-f(z)}{c^2}\int_{|\xi|<1}\xi \nu(d\xi).
\end{align}
%(\frac{f(z)-\dot{z}}{c})^2+f'(z)+2\frac{\dot{z}-f(z)}{c^2}\int_{|\xi|<1}\xi \nu(d\xi)
%(\frac{\dot{z}-f(z)+\int_{|\xi|<1}\xi\nu(d\xi)}{c})^2+f'(z)
We can also rewrite the OM function given in (\ref{OMP}) as follows:
\begin{align}\label{OMR}
OM(\dot{z},z)=\frac{1}{c^2}[\dot{z}-f(z)+\int_{|\xi|<1}\xi\nu(d\xi)]^2+f'(z)-(\frac{\int_{|\xi|<1}\xi\nu(d\xi)}{c})^2.
\end{align}
Now with the help of (\ref{OMP}), we can give some versions of (\ref{tubePf}):
\begin{align}\label{OM1}
\mu_X(K(z,\epsilon))\propto& \exp\{\frac{1}{2}(\frac{k^2(x_0)}{c^2}+k'(x_0)-\frac{2k(x_0)}{c^2}\int_{|\xi|<1}\xi\nu(d\xi))(u-s)\}\notag\\&\cdot\mu_Y(K(x_0,\epsilon))\exp\{-\frac{1}{2}\int_s^uOM(\dot{z},z)dt\},
\end{align}
\begin{align}\label{OM2}
\mu_X(K(z,\epsilon))\propto&\exp\{\frac{1}{2}(\frac{f^2(x_0)}{c^2}+f'(x_0)-\frac{2f(x_0)}{c^2}\int_{|\xi|<1}\xi\nu(d\xi))(u-s)\}\notag\\&\cdot\mu_X(K(x_0,\epsilon))\exp\{-\frac{1}{2}\int_s^uOM(\dot{z},z)dt\},
\end{align}
\begin{align}\label{OM3}
\mu_X(K(z,\epsilon))\propto\mu_{W^c}(K(x_0,\epsilon))\exp\{-\frac{1}{2}\int_s^uOM(\dot{z},z)dt\}.
\end{align}
In the last expression the $W_t^c$ is defined by
\begin{eqnarray}\label{W}
dW_t^c=c\;dB_t+\int_{|x|<1}x\tilde{N}(dt,dx).
\end{eqnarray}
The proof is complete.
\end{proof}
\begin{remark}\label{Remark 4.1} Compared to Brownian noise situation, additional term $2\frac{\dot{z}-f(z)}{c^2}\int_{|\xi|<1}\xi\nu(d\xi)$ appears in OM function (\ref{OM}). If the system (\ref{Equation-xmm}) is only driven by Brownian noise, i.e., $\nu=0$,  OM function (\ref{OM}) is consistent with the result of diffusion process by \cite{DB, IS}. \par In addition, we can find: (i) if the L\'{e}vy measure $\nu$ is asymmetric, the nontrivial effects of the additional term on system (\ref{Equation-xmm}) can be shown. Further it makes the most probable tube of system (\ref{Equation-xmm}) different from the case of Brownian case; (ii) if the  L\'{e}vy measure $\nu$ is symmetric, then the integral $\int_{|\xi|<1}\xi\nu(d\xi)=0$ in the sense of Cauchy principal values, i.e., the additional term $2\frac{\dot{z}-f(z)}{c^2}\int_{|\xi|<1}\xi\nu(d\xi)$ vanishes. It shows that the most probable path is the same as the case of Brownian noise.
\end{remark}
%most probable path refer to the $z_m(t)$ which maxiemse the In fact, as for we restrict ourselves on bounded functions $z(t)$, twice differential with bounded derivatives, And there  }
%Thus, we shall call term $\int_{|x|<1}x\nu^\alpha(dx)$ small jump capacity which is crucial for our result
\begin{remark}\label{Remark 4.2} Theorem \ref{theorem 4.1} presents the OM function for a scalar stochastic differential equation with a general L\'{e}vy process with jump measure $\nu$ satisfying integral $\int_{|\xi|<1}\xi\nu(d\xi)<\infty$. In particular, the result of Theorem \ref{theorem 4.1} is valid for $\alpha$-stable L\'evy motion with $0<\alpha<1$ as $\int_{|\xi|<1}|\xi|\nu_{\alpha, \beta}(d\xi)<\infty$. Thus, $\int_{|\xi|<1}\xi\nu_{\alpha, \beta}(d\xi)<\infty$.
\end{remark}

\begin{remark}\label{Remark 4.3} The conclusion of the Theorem \ref{theorem 4.1} can be generalized into higher dimensional cases as long as $\int_{|\xi|<1}\xi\nu(d\xi)<\infty$. The Onsager-Machlup function in n-dimensional case is given, up to an additive constant, by $OM(\dot{z},z)=|\frac{f(z)-\dot{z}}{c}|^2+{\rm div}(f)(z)+2\frac{\dot{z}-f(z)}{c^2}\cdot\int_{|\xi|<1}\xi\nu(d\xi)$, where $|\cdot|$ denotes the Euclidean norm, ``$\cdot$'' represents the scalar product of vectors and ${\rm div}(f)(z)=\Sigma_{i=1}^n\frac{\partial}{\partial z_i}f^i(z)$.
\end{remark}
%OM(\dot{z},z)=|\frac{\dot{z}-f(z)+\int_{|\xi|<1}\xi\nu(d\xi)}{c}|^2+{\rm div}(f)(z)$

\renewcommand{\theequation}{\thesection.\arabic{equation}}
\setcounter{equation}{0}

\section{The most probable path}

In this section, we restrict ourselves on paths with fixed initial and fixed final point, so that we can gain more information of the jump-diffusion process (\ref{Equation-xmm}).  We directly minimize the OM functional $\int_s^u OM(\dot{z},z)dt$ to find the approximate most probable path when it exists. We can ensure that the OM functional does indeed have a minimizer, at least within an appropriate Sobolev space if we impose some conditions on the drift term $f(x)$ and L\'{e}vy jump measure $\nu$ in (\ref{Equation-xmm}). Denote OM functional by $I[z]:= \int_s^uOM(\dot{z},z)dt$ for convenience. \par

\begin{theorem}\label{theorem 5.1} Denote $\mathcal{A}= \{z\in H^1([s, u]; \mathbb{R}): z(s)=x_0, z(u)=x_1\}$ the admissible set consisting of transition paths. Assume that (i) the drift term $f(x)$ in (\ref{Equation-xmm}) belongs to $C^1(\mathbb{R})$ and is global Lipschitz and its derivative is bounded below, i.e. there exists a constant $M$ such that $f'(x)\geq M$; (ii) jump measure $\nu$ satisfies $(\int_{|\xi|<1}\xi\nu(d\xi))^2\geq Mc^2$. Then there exists at least one function $z^\ast \in \mathcal{A}$ such that $$I[z^\ast]=\min_{z\in \mathcal{A}}I[z].$$ The $z^\ast$ defines the most probable path connecting $x_0$ and $x_1$.
\end{theorem}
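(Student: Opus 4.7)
The plan is to apply the direct method of the calculus of variations in the Hilbert space $H^1([s,u];\mathbb{R})$. Three ingredients are needed: that $I$ is bounded below on $\mathcal A$, that it is coercive (every minimizing sequence is bounded in $H^1$), and that it is weakly sequentially lower semicontinuous. The admissible set $\mathcal A$ is convex, and its endpoint constraints survive weak $H^1$-limits thanks to the compact embedding $H^1([s,u])\hookrightarrow C([s,u])$, so the infimum is attained at the weak limit of any minimizing sequence.

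The key manipulation is a boundary-term reduction that isolates the quadratic-in-$\dot z$ skeleton of $OM$. Setting $A:=\int_{|\xi|<1}\xi\,\nu(d\xi)$ and expanding (\ref{OM}), the two terms linear in $\dot z$, namely $-2\dot z f(z)/c^2$ and $2A\dot z/c^2$, are exact $t$-derivatives; their time-integrals reduce to the endpoint constants $-(2/c^2)(F(x_1)-F(x_0))$ and $(2A/c^2)(x_1-x_0)$, where $F(y):=\int_0^y f(r)\,dr$. The remaining expression reads
\begin{equation*}
I[z]=\frac{1}{c^2}\|\dot z\|_{L^2}^2+\frac{1}{c^2}\int_s^u (f(z)-A)^2\,dt+\int_s^u f'(z)\,dt-\frac{A^2(u-s)}{c^2}+C(x_0,x_1).
\end{equation*}
Hypothesis (i) gives $\int_s^u f'(z)\,dt\geq M(u-s)$, while hypothesis (ii) ensures that the residual $M(u-s)-A^2(u-s)/c^2$ is a finite constant; dropping the nonnegative $L^2$-term in $(f(z)-A)^2$ yields $I[z]\geq c^{-2}\|\dot z\|_{L^2}^2+K$ for a universal constant $K$. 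Together with the Poincaré-Friedrichs inequality $\|z\|_{L^\infty}\leq |x_0|+\sqrt{u-s}\,\|\dot z\|_{L^2}$ (using $z(s)=x_0$), this gives both a uniform lower bound and full $H^1$-coercivity.

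For lower semicontinuity, pick a minimizing sequence $\{z_n\}\subset\mathcal A$ with $I[z_n]\to\inf_{\mathcal A} I$. Coercivity makes $\{z_n\}$ bounded in $H^1$; by reflexivity and the Rellich-Kondrachov theorem, a subsequence converges weakly in $H^1$ to some $z^\ast$ and uniformly on $[s,u]$, so $z^\ast\in\mathcal A$. In the decomposition above, $c^{-2}\|\dot z\|_{L^2}^2$ is weakly lower semicontinuous on $H^1$ (the norm is), while $\int_s^u(f(z_n)-A)^2\,dt$ and $\int_s^u f'(z_n)\,dt$ converge to the corresponding integrals at $z^\ast$ by uniform convergence combined with the continuity of $f$ and $f'$ on the compact image $\bigcup_n z_n([s,u])$; the endpoint constants pass trivially. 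Therefore $I[z^\ast]\leq\liminf_n I[z_n]=\inf_{\mathcal A}I$, so $z^\ast$ minimizes $I$ and defines the desired most probable path.

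The main obstacle is the coercive estimate, because the integrand $OM$ is not pointwise nonnegative and a naive Young-type bound on the cross term $-2\dot z f(z)/c^2$ would introduce $\dot z^2$ contributions with unfavorable coefficients (as $f$ only has linear growth under hypothesis (i)). The integration-by-parts trick circumvents this by trading the cross terms for endpoint constants, leaving a clean coercive $c^{-2}\|\dot z\|_{L^2}^2$ piece plus quantities bounded below path-uniformly by hypotheses (i)-(ii). Once this reduction is in hand, convexity in $\dot z$ and the Rellich-Kondrachov compactness make the remaining variational template routine.
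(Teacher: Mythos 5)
Your proof is correct, and it takes a genuinely different route from the paper's. The paper keeps the OM integrand in the completed-square form $\frac{1}{c^2}[\dot z-f(z)+A]^2+f'(z)-A^2/c^2$ (with $A=\int_{|\xi|<1}\xi\,\nu(d\xi)$), obtains coercivity by citing the proof of Lemma 4.2 of \cite{Wan} — which yields the weaker bound $\int\frac{1}{c^2}[\dot z-f(z)+A]^2\,dt\ge \frac{C_2^{-1}}{c^2}\int\dot z^2\,dt-\mathrm{const}$, with $C_2$ depending on the Lipschitz constant of $f$ — and then invokes Theorem 2 of \cite[p.~470]{Evans} (convexity in $\dot z$ plus coercivity implies existence) to conclude. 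You instead strip out the cross terms $-2\dot z f(z)/c^2$ and $2A\dot z/c^2$ as exact derivatives (null Lagrangians), which on the fixed-endpoint class $\mathcal A$ contribute only the constants $-(2/c^2)(F(x_1)-F(x_0))$ and $(2A/c^2)(x_1-x_0)$, and absorb $-2Af(z)/c^2$ into $(f(z)-A)^2/c^2$. This buys you coercivity with the sharp constant $1/c^2$ independent of the Lipschitz constant, eliminates the dependence on Wan's lemma, and reduces weak lower semicontinuity to the elementary weak l.s.c. of $\|\dot z\|_{L^2}^2$ together with uniform convergence of the zeroth-order terms via Rellich--Kondrachov, so the whole direct method is self-contained. (Your chain-rule step $\int_s^u\dot z f(z)\,dt=F(x_1)-F(x_0)$ is legitimate for $z\in H^1$ since $F\circ z$ is absolutely continuous.) Both arguments also expose that hypothesis (ii) plays no real role in the existence proof — it only fixes the sign of the constant $M(u-s)-A^2(u-s)/c^2$, and finiteness of $\int_{|\xi|<1}\xi\,\nu(d\xi)$ together with $f'\ge M$ is all that is actually used; your write-up makes this slightly more transparent than the paper's.
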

\begin{proof}
The OM function given in (\ref{OM}) can be rewritten as follows:
\begin{align}\label{OMR}
OM(\dot{z},z)=\frac{1}{c^2}[\dot{z}-f(z)+\int_{|\xi|<1}\xi\nu(d\xi)]^2+f'(z)-(\frac{\int_{|\xi|<1}\xi\nu(d\xi)}{c})^2.
\end{align}
By the proof of Lemma 4.2 of \cite{Wan}, we obtain:
$$\int_s^u\frac{1}{c^2}[\dot{z}-f(z)+\int_{|\xi|<1}\xi\nu(d\xi)]^2dt\geq \frac{C_2^{-1}}{c^2}\int_s^u\dot{z}^2dt-\frac{C_1C_2^{-1}}{c^2}(f(0)-\int_{|\xi|<1}\xi\nu(d\xi))^2,$$ where $C_1$ and $C_2$ are two positive constants depending on Lipschitz constants of $f(x)$ and $\tau=[s,u]$.\par
\noindent Thus, we have:
$$I[z]\geq\frac{C_2^{-1}}{c^2}\int_s^u\dot{z}^2dt-\frac{C_1C_2^{-1}}{c^2}(f(0)-\int_{|\xi|<1}\xi\nu(d\xi))^2-(\frac{(\int_{|\xi|<1}\xi\nu(d\xi))^2}{c^2}-M)(u-s)$$
Hence, the coerciveness on $I[\cdot]$ follows. On the other hand, $OM(\dot{z},z)$ is convex in the variable $\dot{z}$. By Theorem 2 of \cite[ Page 470] {Evans}, we reach the conclusion. The proof is complete.
\end{proof}

In addition, if we restrict ourselves to twice differentiable functions $z(t)$, the most probable path  $z_m(t)$ can be found by variation of a functional $\int_s^u OM(\dot{z},z)dt$, which is given as follows:
\begin{eqnarray}\label{varia}
\delta\int_s^uOM(\dot{z},z)dt=0,
\end{eqnarray}
where
\begin{eqnarray}\label{case1}
z_m(s)=x_0, z_m(u)=x_1,\;\;\; x_1\in \mathbb{R}.
\end{eqnarray}
%or
%\begin{eqnarray}\label{case2}
%z_m(s)=x_0, \frac{\partial OM(\dot{z},z)}{\partial\dot{z}}\mid_{\dot{z}(u),z(u)}=0.
%\end{eqnarray}
%Now we consider the cases mentioned above respectively. (\ref{case1}) and (\ref{case2}) correspond to the variational problem of classical mechanics.
We further get the Euler-Lagrange equation
\begin{eqnarray}\label{EL}
\frac{d}{dt}\frac{\partial OM(\dot{z},z)}{\partial\dot{z}}=\frac{\partial OM(\dot{z},z)}{\partial z}
\end{eqnarray}
as an ordinary differential equation for $z_m(t)$. With the OM function given in (\ref{OM}), we have:
\begin{equation}\label{evolu1}
\ddot{z}_m=\frac{c^2}{2}f''(z_m)+f'(z_m)f(z_m)-f'(z_m)\int_{|\xi|<1}\xi\nu(d\xi)
\end{equation}
with boundary conditions
\begin{equation}\label{condition1}
z_m(s)=x_0,    \;\;\;   z_m(u)=x_1.
\end{equation}

As we have shown, we could determine a most probable tube $K(z_m,\epsilon)$ by means of a variation principle (\ref{varia}) where $\epsilon$ must be smaller than a given $\gamma$. Then equations (\ref{evolu1})-(\ref{condition1}) hold for each $\epsilon<\gamma$. It is worth noting that the function $z_m(t)$ we have found is not a real orbit of system (\ref{Equation-xmm}). It is a reference path and the probability that the   jump-diffusion process stay in a tube around $z_m(t)$ is maximal or local maximal in the sense of Theorem \ref{theorem 5.2}.\par

\begin{theorem}\label{theorem 5.2} If the solution of Euler-Lagrange equation (\ref{evolu1})-(\ref{condition1}) is smooth and function $OM(\dot{z},z)$ is convex in the variable $\dot{z}$, then this solution is indeed a local minimizer of OM functional $I[z]$. Furthermore, this solution is in fact a global minimizer if the joint mapping $(\dot{z}, z)\longmapsto OM(\dot{z},z)$ is convex.
\end{theorem}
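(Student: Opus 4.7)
The plan is to prove both parts via classical variational-calculus arguments, using the Euler--Lagrange equation (\ref{EL}) together with the stated convexity hypotheses on $OM$. Throughout, let $z^\ast$ be a smooth solution of (\ref{evolu1})--(\ref{condition1}) and, for any admissible path $v$ with $v(s)=x_0$, $v(u)=x_1$, write $\phi:=v-z^\ast$ so that $\phi(s)=\phi(u)=0$.

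I would begin with the global minimizer claim since it is cleaner. Assuming joint convexity of $(\dot{z},z)\mapsto OM(\dot{z},z)$, the supporting-hyperplane inequality applied pointwise at each $t\in[s,u]$ gives
\[
OM(\dot{v},v)\;\geq\;OM(\dot{z}^\ast,z^\ast)+\frac{\partial OM}{\partial \dot{z}}(\dot{z}^\ast,z^\ast)\,\dot{\phi}+\frac{\partial OM}{\partial z}(\dot{z}^\ast,z^\ast)\,\phi.
\]
The next step is to integrate over $[s,u]$, integrate by parts on the $\dot{\phi}$ term (the boundary contributions vanish because $\phi$ vanishes at both endpoints), and invoke (\ref{EL}) to collapse the remaining integrand identically to zero. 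This yields $I[v]\geq I[z^\ast]$, and since $v$ is arbitrary, $z^\ast$ is a global minimizer.

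For the local minimizer claim, I would restrict to perturbations $\phi\in C^1$ with $\phi(s)=\phi(u)=0$ and $\|\phi\|_{C^1}$ small, and study the second variation $\delta^2 I[\phi]=\int_s^u[OM_{\dot z\dot z}\dot\phi^2+2OM_{\dot z z}\dot\phi\,\phi+OM_{zz}\phi^2]\,dt$ evaluated along $(\dot z^\ast,z^\ast)$. The first-order variation vanishes by (\ref{EL}) after integration by parts, so local minimality reduces to showing $\delta^2 I[\phi]\geq 0$. Reading off the explicit form in (\ref{OM}), one has $OM_{\dot z\dot z}=2/c^2$ uniformly, so the convexity in $\dot z$ is in fact strict with a uniform constant; the cross term $OM_{\dot z z}\dot\phi\,\phi$ is controlled by Young's inequality, and the pure $\phi^2$ term (with $OM_{zz}$ bounded on $[s,u]$ because $z^\ast$ is smooth on this compact interval) is controlled by a Poincar\'e inequality on $H^1_0([s,u])$. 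A Taylor remainder estimate then shows that the difference $I[z^\ast+\phi]-I[z^\ast]$ is nonnegative for $\|\phi\|_{C^1}$ small enough, giving local minimality.

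The main obstacle is the local part. Even with the uniform strict convexity $OM_{\dot z\dot z}=2/c^2$, the sign of $\delta^2 I[\phi]$ depends on the interplay between the Poincar\'e constant (hence the interval length $u-s$) and the bounds on $OM_{zz}$ and $OM_{\dot z z}$: for $u-s$ large, conjugate points of the Jacobi equation attached to $OM$ can arise and spoil the naive second-variation estimate. Making the bound rigorous therefore either demands restricting to a sufficiently short interval so that Poincar\'e dominates, or invoking Jacobi/Riccati theory to check that no conjugate point of $z^\ast$ lies strictly inside $(s,u)$. By comparison, the global part is essentially routine once joint convexity is in hand, because a critical point of any convex functional is automatically a global minimum.
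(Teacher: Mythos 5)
Your treatment of the global claim is correct and is, in substance, exactly the paper's argument: the paper proves that half by citing the Remark in Section 8.2.3 of Evans, whose content is precisely your pointwise supporting-hyperplane inequality followed by integration by parts and the Euler--Lagrange equation (\ref{EL}). The only point worth adding is that for competitors $v\in H^1$ rather than $C^1$ the integration by parts remains legitimate because $z^\ast$ is smooth, so $t\mapsto \frac{\partial OM}{\partial\dot z}(\dot z^\ast,z^\ast)$ is $C^1$ and $\phi=v-z^\ast$ lies in $H^1_0([s,u])$.

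The local claim is where your proof stops short, and the obstacle you name is real, not an artifact of your method. Convexity of $OM$ in $\dot z$ is only the Legendre condition; it is necessary but not sufficient for weak (even $C^1$-) local minimality, and what is actually needed is the uniform positivity of the second variation, equivalently the absence of a conjugate point of the Jacobi equation in $(s,u)$. The paper's own proof of this half is a bare citation to Theorem 10 in Section 8.2.5 of Evans, but the hypothesis of that theorem is a coercivity bound on the full second variation, not mere convexity in the velocity variable, so the citation does not close the gap either. The paper's double-well example makes the danger concrete: modulo null Lagrangians (exact time derivatives, which affect neither (\ref{EL}) nor the fixed-endpoint minimization), the Lagrangian (\ref{example-OM-2}) near the saddle $z=0$ with $d_{\nu_{\alpha,\beta}}=0$ reduces to $\dot z^2+1-2z^2$, whose Jacobi equation $\ddot\eta=-2\eta$ produces a conjugate point after time $\pi/\sqrt{2}\approx 2.22$; for the larger values of $T$ used in Section 6 an extremal spending that long near the saddle need not be a local minimizer at all. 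So your proposal proves the second assertion of Theorem \ref{theorem 5.2} but not the first; the first is not provable as stated, and the repair is exactly what you suggest: either restrict to $u-s$ small enough that the Poincar\'e constant dominates the bounds on $OM_{zz}$ and $OM_{\dot zz}$ along $z^\ast$, or add the Jacobi (no-conjugate-point) condition as a hypothesis.
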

\begin{proof} Note that our OM function given in (\ref{OMR}) is convex in the variable $\dot{z}$. Thus, by Theorem 10 in Section 8.2.5 of \cite[ Page 481] {Evans}, the first conclusion holds. And by the Remark in Section 8.2.3 of \cite[ Page 474] {Evans}, we reach the second conclusion. The proof is complete.
\end{proof}

\begin{remark}\label{Remark 5.1} Our method to derive the OM functional is restricted on differentiable functions $z(t)$. And once the expression of OM functional is available, it is reasonable to define OM functional $I[\cdot]$ not only for differentiable functions, but also for functions in the Sobolev space. Thus, Theorem \ref{theorem 5.1} offers a sufficient condition to establish a minimizer among the functions in $\mathcal{A}$ and the minimizer of $I[\cdot]$ could give the most probable path for system (\ref{Equation-xmm}). Although the existence theory of minimizers for the OM functional in the space of smooth functions through direct minimization is incomplete, Theorem \ref{theorem 5.2} offers a sufficient condition to obtain the existence of minimizer in $C^2(\mathbb{R})$ through Euler-Lagrange equation.
\end{remark}

The problem of solving (\ref{evolu1})-(\ref{condition1}) is referred to as the two-point boundary value problem. As we know, this problem does not always have a solution. There is the same issue for the case of diffusion processes. The literature on the existences and uniqueness of solutions for (\ref{evolu1})-(\ref{condition1}) can be referred to \cite{HP, Se}.\par

\section{Numerical experiments}

In this section, we choose asymmetric $\alpha$-stable L\'evy motion with $0<\alpha<1$ as the driven L\'evy noise. We illustrate our results analytically and numerically in some concrete examples whose most probable paths can be found by means of Euler-Lagrange equations. In the following, we denote $\int_{|\xi|<1}\xi\nu_{\alpha,\beta}(d\xi)=\frac{\alpha}{\Gamma(2-\alpha)\cos(\frac{\pi\alpha}{2})}\beta$ by $d_{\nu_{\alpha,\beta}}$ and take $s=0$, $u=T$ for convenience. Notations $d, d_j,j=0,1,2,3,4$ appeared in figures indicate the different values of $d_{\nu_{\alpha,\beta}}$, which may change from one place to another.\par
%例子1
\noindent {\bf Example 1.} (A stochastic system with linear potential)\par
\noindent Consider the following scalar SDE with linear potential:
\begin{equation} \label{example-1}
  \begin{split}
  &dX_t=-X_{t-}dt+dB_t+\int_{|x|<1}x\tilde{N}(dt,dx),    \\
  &f(x)=-x,\; g(x)=1,\; X_0=x_0\in \mathbb{R}.
  \end{split}
\end{equation}
%where the driven L\'evy noise is asymmetric $\alpha$-stable L\'evy motion with $0<\alpha<1$.
By using (\ref{OM}), we obtain the OM function of this system :
\begin{equation}\label{example-OM}
OM(\dot{z},z)=(z+\dot{z})^2-1+2(\dot{z}+z)\int_{|\xi|<1}\xi\nu_{\alpha,\beta}(d\xi).
\end{equation}
As $f(x)=-x$ satisfies the conditions in the Theorem \ref{theorem 5.1}, the OM functional $\int_0^T OM(\dot{z},z)dt$ does indeed have a minimizer in $\mathcal{A} = \{z\in H^1([0, T];\mathbb{R}) : z(0)=x_0, z(T)=x_1\}$, where the OM function is given in (\ref{example-OM}). And this minimizer gives the most probable path for system (\ref{example-1}).\par
 On the other hand, via (\ref{evolu1})-(\ref{condition1}), the most probable path $z_m(t)$ satisfies the following two-point boundary value problem:
\begin{equation} \label{example-evolu1}
  \begin{split}
  &\ddot{z}_m=z_m+\int_{|\xi|<1}\xi\nu_{\alpha,\beta}(d\xi),    \\
  &z_m(0)=x_0, z_m(T)=x_1,\;\;\; x_1\in \mathbb{R}.
  \end{split}
\end{equation}
%\begin{equation}\label{example-evolu1}
%\ddot{z}_m=z_m+\int_{|\xi|<1}\xi\nu_{\alpha,\beta}(d\xi),
%\end{equation}
%\begin{equation}\label{example-condition1}
%z_m(s)=x_0, z_m(u)=x_1, x_1\in \mathbb{R}.
%\end{equation}
Note that the joint mapping $(\dot{z}, z)\longmapsto OM(\dot{z},z)$ given in (\ref{example-OM}) is convex, thus the solution of (\ref{example-evolu1}) is indeed a minimizer of functional $\int_0^T OM(\dot{z},z)dt$ by the Theorem \ref{theorem 5.2}.
%Notice that this linear boundary value problem always has a unique solution because the function $g(t, p, q)=p+\int_{|x|<1}x\nu_{\alpha,\beta}(dx)$ satisifies the conditions of Lemma 5.1. \par
According to the variation of constants \cite{HP}, we obtain the explicit solution for this problem (\ref{example-evolu1}):
\begin{equation}\label{example-solution1}
z_m(t)=(c_1+\frac{1}{2}d_{\nu_{\alpha,\beta}})e^t+(c_2+\frac{1}{2}d_{\nu_{\alpha,\beta}})e^{-t}-d_{\nu_{\alpha,\beta}}
\end{equation}
where $c_1, c_2$ are given by
\begin{equation}\label{constants1}
c_1=\frac{\left|\begin{array}{cc}
x_0+d_{\nu_{\alpha,\beta}}-\frac{1}{2}d_{\nu_{\alpha,\beta}}-\frac{1}{2}d_{\nu_{\alpha,\beta}} & 1\\
x_1+d_{\nu_{\alpha,\beta}}-\frac{1}{2}e^Td_{\nu_{\alpha,\beta}}-\frac{1}{2}e^{-T}d_{\nu_{\alpha,\beta}} & e^{-T}
\end{array}\right|}{\left|\begin{array}{cc}
1 & 1\\
e^{T} & e^{-T}
\end{array}\right|},
\end{equation}
\begin{equation}\label{constants2}
c_2=\frac{\left|\begin{array}{cc}
1 & x_0+d_{\nu_{\alpha,\beta}}-\frac{1}{2}d_{\nu_{\alpha,\beta}}-\frac{1}{2}d_{\nu_{\alpha,\beta}}\\
e^T & x_1+d_{\nu_{\alpha,\beta}}-\frac{1}{2}e^Td_{\nu_{\alpha,\beta}}-\frac{1}{2}e^{-T}d_{\nu_{\alpha,\beta}}
\end{array}\right|}{\left|\begin{array}{cc}
1 & 1\\
e^{T} & e^{-T}
\end{array}\right|}
\end{equation}
respectively. Notice that $\left|\begin{array}{cc}
1 & 1\\
e^{T} & e^{-T}
\end{array}\right|$
is not euqal to zero unless $T=0$. Thus, for every given initial point and finial point, we can always find the most probable tube of the system (\ref{example-1}).\par
In Fig \ref{fig1}(a), we show a most probable path of the system (\ref{example-1}) obtained by the shooting method for the two-point boundary value problem, refer to \cite{Se}. The initial value is chosen as $z(0)=3$ and the finial point is chosen as $z(1)=4$. It is seen that the numerical results agree with the theoretical results very well. In Fig \ref{fig1}(b), the green line that corresponds to $d_{\nu_{\alpha,\beta}}=0$ denotes the most probable path of (\ref{example-1}) only driven by
Gaussian noise. And it's clearly different from those paths with nonzero $d_{\nu_{\alpha,\beta}}$. This displays the effect of non-Gaussian noise on the dynamics.  \\
\begin{figure}
 \centering
 \subfigure[]{
 %label{fig1:subfig:a}
 \includegraphics[width=0.45\textwidth]{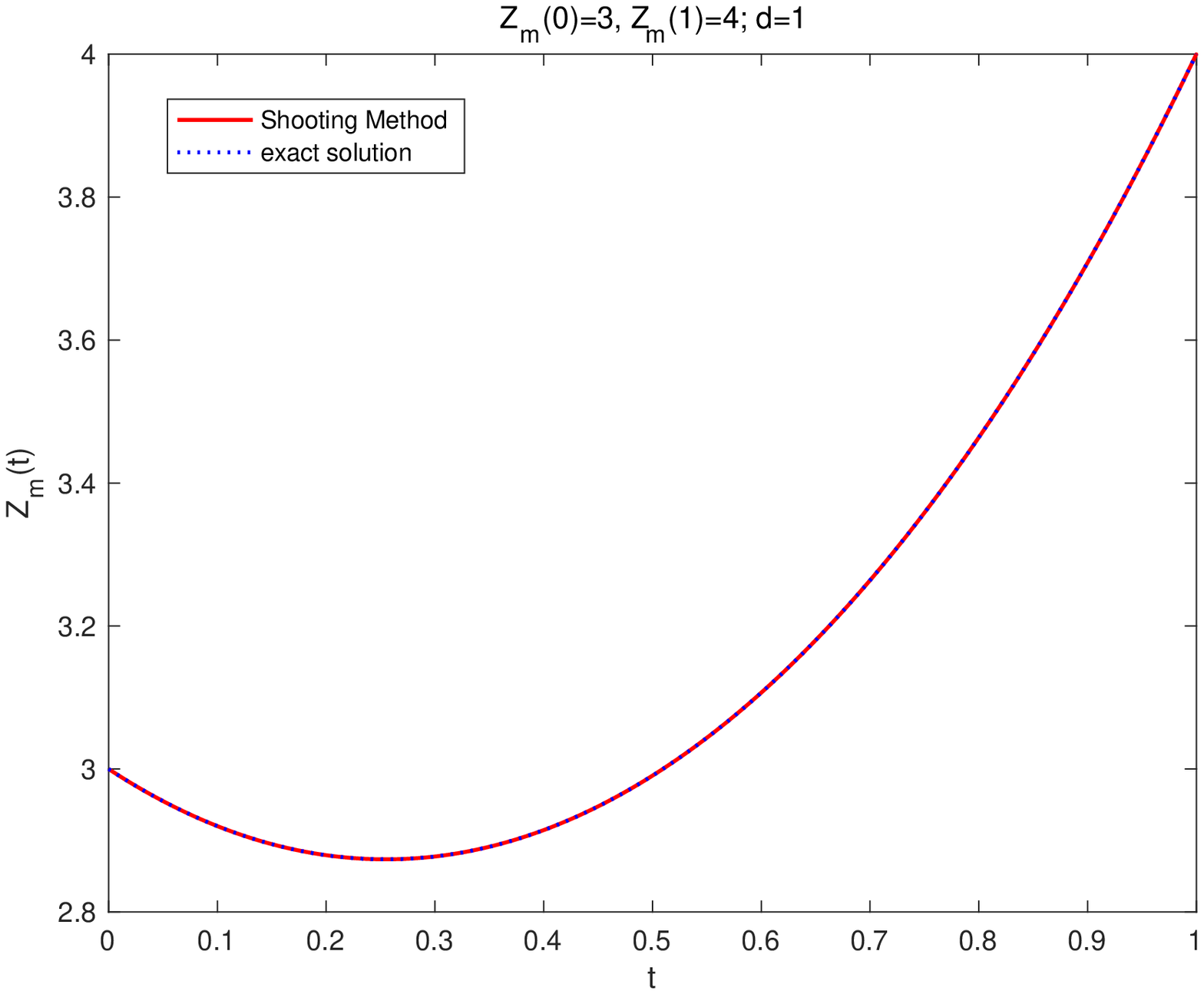}
 }
 \subfigure[]{
 %label{fig1:subfig:a}
 \includegraphics[width=0.45\textwidth]{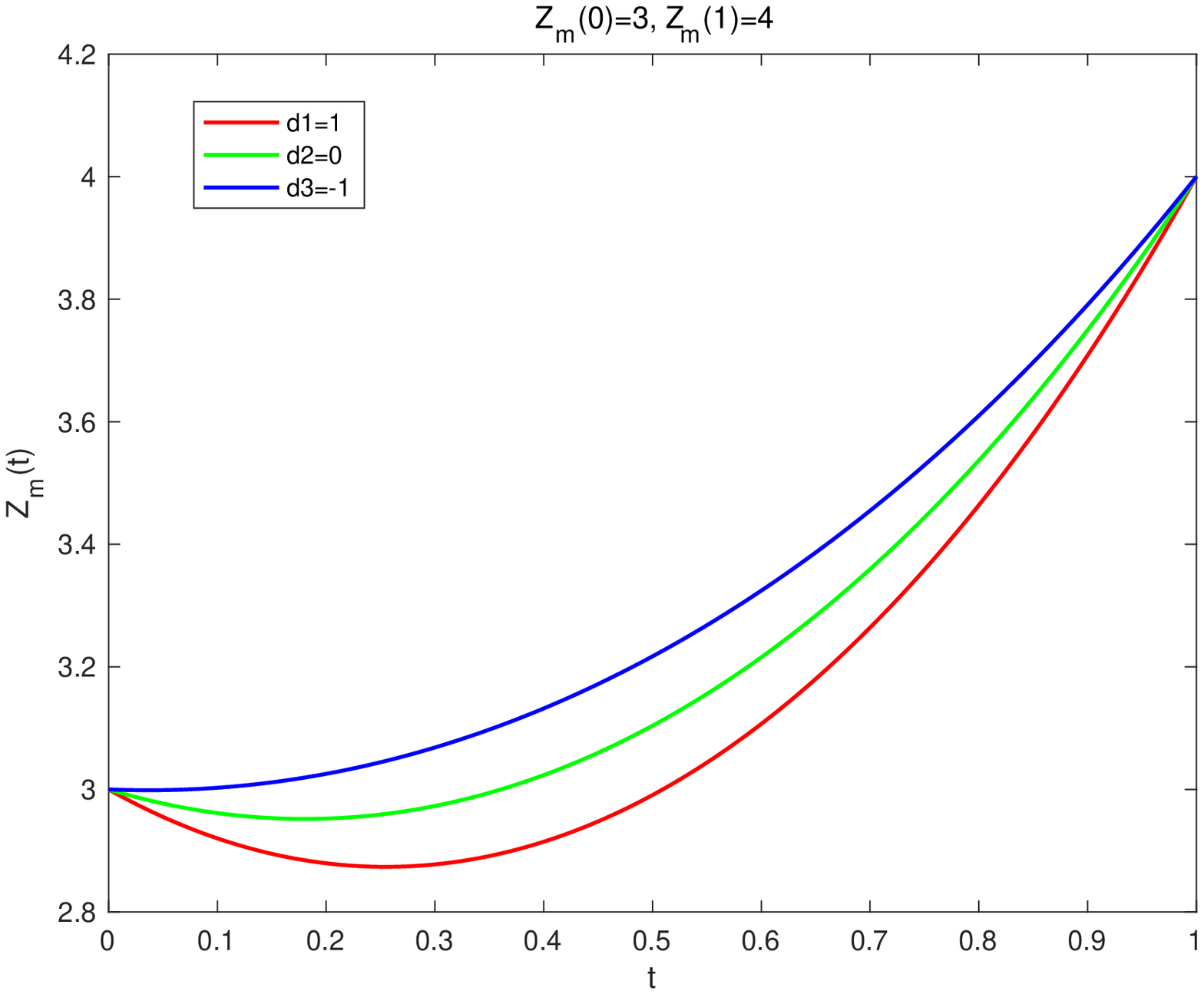}
 }
 \caption{ (Color online) (a) The exact solution (dotted line) versus the numerical solution
(red line) for (\ref{example-evolu1}): Initial value
z(0) = 3, finial value z(1) = 4, and $d_{\nu_{\alpha,\beta}}=1$, (b) Most probable path for system (\ref{example-1}) with respect to time interval parameter and $d_{\nu_{\alpha,\beta}}$: Initial value
z(0) = 3, finial value z(1) = 4.}
 %\label{fig:subfig}%
 \label{fig1}
 \end{figure}

%例子2
%\noindent {\bf Example 2.} Consider the following nonlinear SDE:
%\begin{equation}\label{example-2}
%dX_t=X_{t-}^2dt+dB_t+\int_{|x|<1}x\tilde{N}(dt,dx),~~~~~~\quad \hbox{in}\quad \mathbb{R},
%\end{equation}
%$$ X_s=x_0\in \mathbb{R}.$$
%In this example, we take $f(x)=x^2$. By using (\ref{OM}), we obtain the OM function of this system :
%\begin{equation}\label{example-OM-2}
%OM(\dot{z},z)=(z^2-\dot{z})^2+2z+2(\dot{z}-z^2)\int_{|x|<1}x\nu_{\alpha,\beta}(dx).
%\end{equation}
%Via (\ref{evolu1})-(\ref{condition1}), the most probable path $z_m(t)$ satisfies the following two-point boundary value problem:
%\begin{equation}\label{example-evolu2}
%\ddot{z}_m=2z_m^3-2z_md_{\nu_{\alpha,\beta}}+1,
%\end{equation}
%\begin{equation}\label{example-condition2}
%z_m(s)=x_0, z_m(u)=x_1, x_1\in \mathbb{R}.
%\end{equation}
%Notice that by applying the Lemma 5.1 to this nonlinear boundary value problem, it will have a unique solution if we make an additional assump, i.e., $d_{\nu_{\alpha,\beta}}<0$ where $d_{\nu_{\alpha,\beta}}$ is dependent of the choices of  L\'{e}vy measure $\nu_{\alpha,\beta}$. Usually, the exact solution of the boundary value problems are too difficult to find, so we have to apply numerical methods. One of the most probable path of (\ref{example-2}) is shown in Figure 2.\par
\noindent {\bf Example 2.} (A stochastic double-well system)\par
\noindent Consider the following scalar SDE
\begin{equation} \label{example-2}
  \begin{split}
  &dX_t=(X_{t-}-X_{t-}^3)dt+dB_t+\int_{|x|<1}x\tilde{N}(dt,dx),    \\
  &f(x)=x-x^3, \;g(x)=1,\; X_0=x_0\in \mathbb{R}.
  \end{split}
\end{equation}
%where the driven L\'evy noise is asymmetric $\alpha$-stable L\'evy motion with $0<\alpha<1$.
The deterministic counterpart $\dot{x}=x-x^3$ has two stable equilibrium states, -1 and 1, and a unstable equilibrium state, 0. When noise is present, the system (\ref{example-2}) may show a transition from state -1 to state 1 and then these two states are called metastable states. By means of the theory we have developed in the previous section, we can find the most probable path of the stochastic double-well system (\ref{example-2}) among all possible smooth curves connecting these two given points.\par
By Theorem \ref{theorem 4.1}, the OM function of this system is given by
\begin{equation}\label{example-OM-2}
OM(\dot{z},z)=(z-z^3-\dot{z})^2+1-3z^2+2(\dot{z}-z+z^3)\int_{|\xi|<1}\xi\nu_{\alpha,\beta}(d\xi).
\end{equation}
And via (\ref{evolu1})-(\ref{condition1}), the most probable path $z_m(t)$ could be described as the deterministic differential equation:
\begin{equation} \label{example-evolu2}
  \begin{split}
  &\ddot{z}_m=3z_m^5-4z_m^3+3d_{\nu_{\alpha,\beta}}z_m^2-2z_m-d_{\nu_{\alpha,\beta}},    \\
  & z_m(0)=-1, \;\;\;  z_m(T)=1,
  \end{split}
\end{equation}
which can be calculated numerically by the shooting method for the two-point boundary value problem \cite{Se}.
Note that the OM function given in (\ref{example-OM-2}) is convex on the variable $\dot{z}$ but the joint mapping $(\dot{z}, z)\longmapsto OM(\dot{z},z,)$ is not convex, thus the solution of (\ref{example-evolu2}) is a local minimizer of functional $\int_0^T OM(\dot{z},z)dt$ by the Theorem \ref{theorem 5.2}.
%Usually, the exact solution of the boundary value problems are too difficult to find, so we apply shooting method for the two-point boundary value problem to find the numerical solution.
We remark that the most probable tube of system (\ref{Equation-xmm}) or, in particular (\ref{example-2}) depends on the choice of time interval, initial state, final state and the driven $\alpha$-stable noise, more precisely, i.e. constant $d_{\nu_{\alpha,\beta}}$. For $\alpha\in(0,1)$ used for our setting, the value of $d_{\nu_{\alpha,\beta}}=\frac{\alpha}{\Gamma(2-\alpha)\cos(\frac{\pi\alpha}{2})}\beta$ can be positive or negative as $\beta\in[-1, 1]$. \par

Fig \ref{fig2} shows the most probable paths with initial state -1 and final state 1 of the system (\ref{example-2}). We choose different values of $T$ to observe the impact of noise on the dynamics. For $T=1$, it is seen that the equation (\ref{example-evolu2}) describing the most probable path can only be solved when $d_{\nu_{\alpha,\beta}}$ is evaluated in the interval $[-1.9407, 1.6305]$. For $T=2$, we obtain an interval $[-1.5150, 0.8487]$ for $d_{\nu_{\alpha,\beta}}$. For $T=2.3$, the most probable path of system (\ref{example-2}) is not available if $d_{\nu_{\alpha,\beta}}$ exceeds the interval $[-1.5057, 1.1396]$. Similarly, for $T=6$, $d_{\nu_{\alpha,\beta}}$ is restricted on interval $[-0.8899, 0.3836]$. Although $T$ could be any finite value from zero to positive infinity, the corresponding value range of $d_{\nu_{\alpha,\beta}}$ does not always exist as equation (\ref{example-evolu2}) is not always solvable. It is seen that for different final time $T$, the most probable paths connecting state -1 and state 1 may take different shapes. For every fixed $T$, the most probable paths are also different from each other depending on  parameter value $d_{\nu_{\alpha,\beta}}$. In particular, both in Fig \ref{fig2}(a), Fig \ref{fig2}(b), Fig \ref{fig2}(c) and Fig \ref{fig2}(d), the green line that corresponds to $d_{\nu_{\alpha,\beta}}=0$ denotes the most probable path of (\ref{example-2}) with absence of L\'{e}vy noise, (i.e. only driven by
Gaussian noise), which is clearly different from those paths with nonzero $d_{\nu_{\alpha,\beta}}$. This finding geometrically displays the effect of non-Gaussian noise on the dynamics. \par

%That is, the choice of time interval $T$ and $d_{\nu_{\alpha,\beta}}$ have obvious influence on the dynamical behaviour of the system (\ref{example-1}) via most probable path when the initial point and finial point are fixed.

\begin{figure}
 \centering
 \subfigure[]{
 %label{fig:subfig:a}%
 \includegraphics[width=0.45\textwidth]{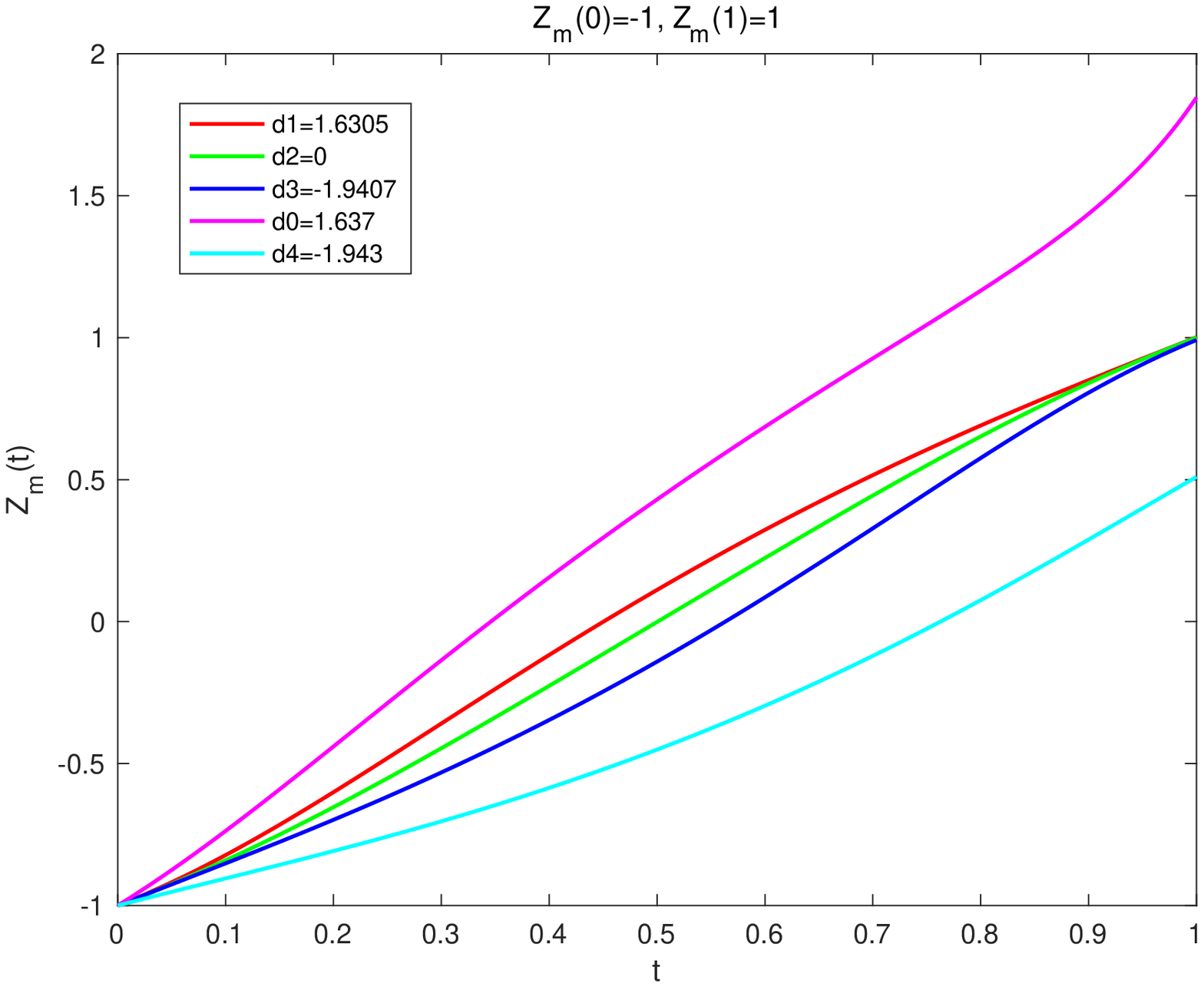}
 }
 \subfigure[]{
 %label{fig:subfig:a}%
 \includegraphics[width=0.45\textwidth]{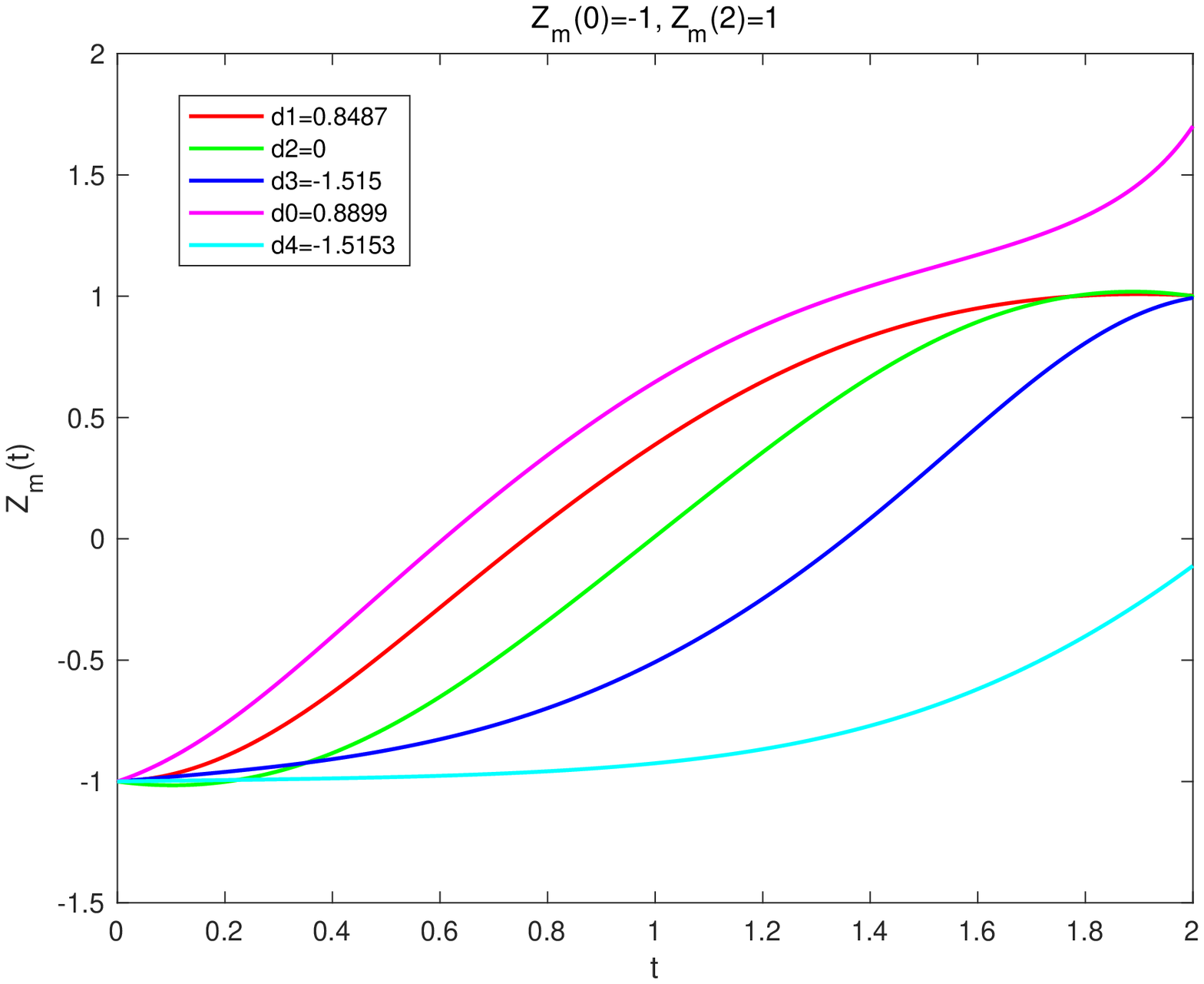}
 }
 \subfigure[]{
 %label{fig:subfig:a}%
 \includegraphics[width=0.45\textwidth]{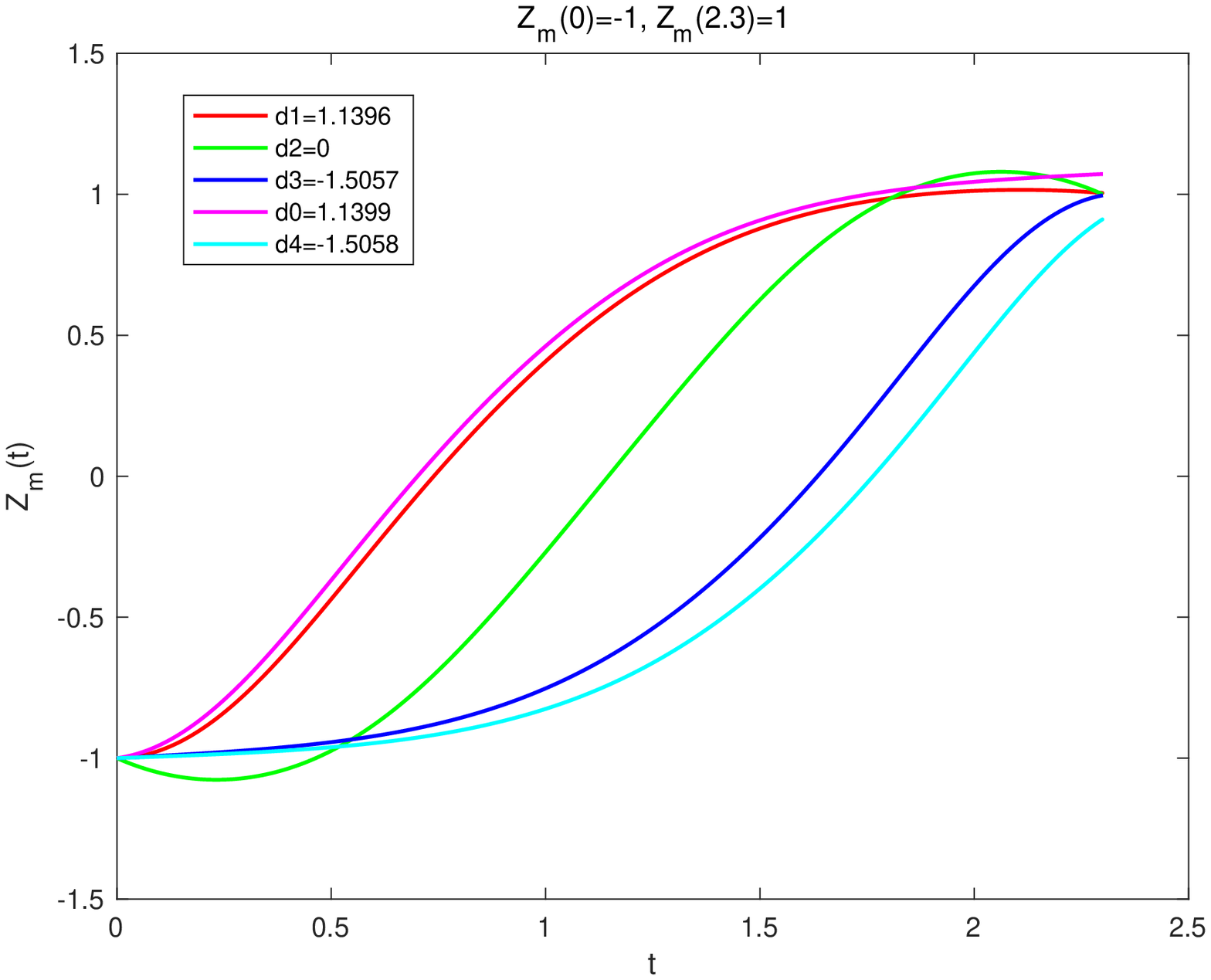}
 }
 \subfigure[]{
 %label{fig:subfig:a}%
 \includegraphics[width=0.45\textwidth]{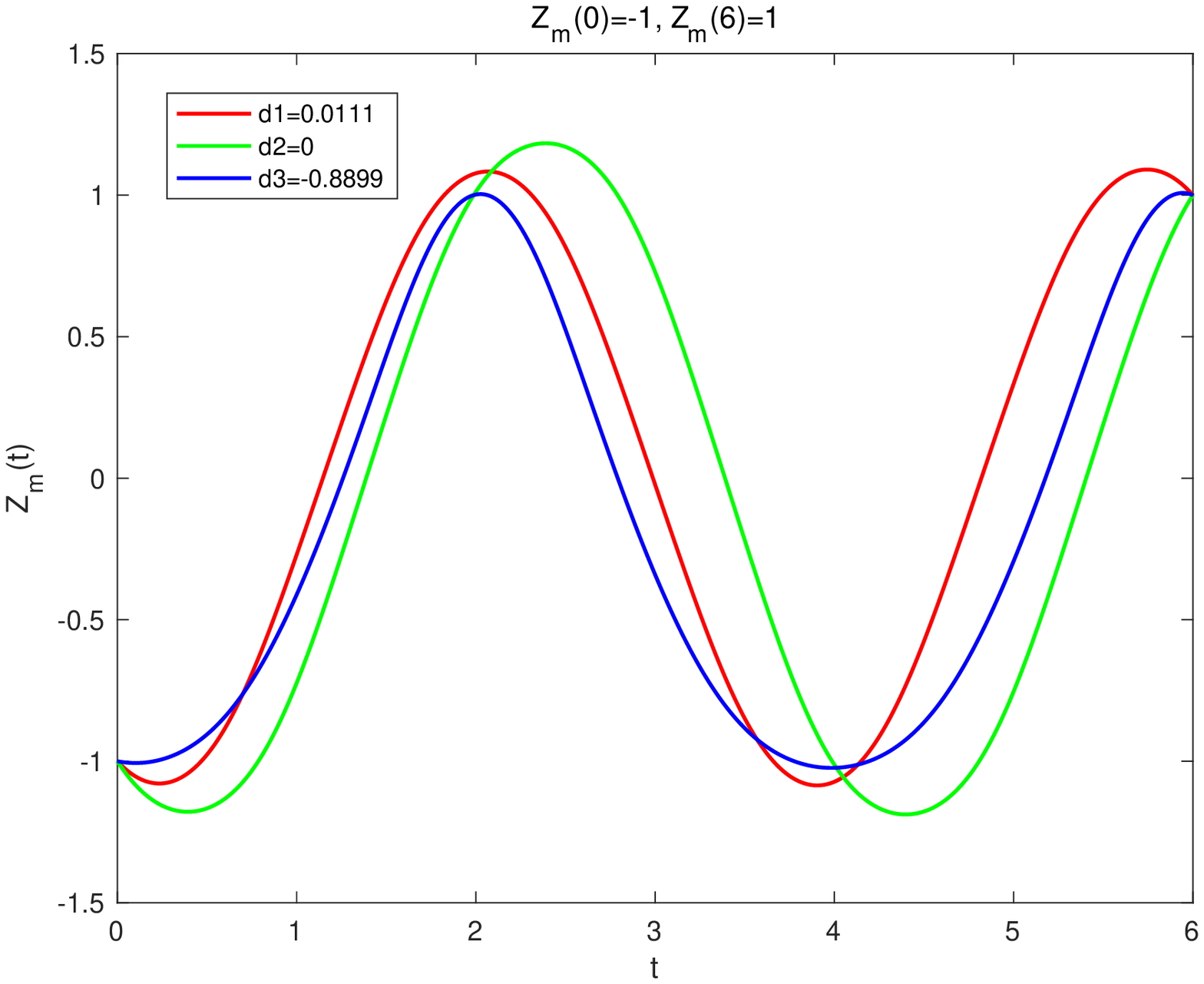}
 }
 \caption{ (Color online) Most probable path for system (\ref{example-2}) with respect to time interval parameter and $d_{\nu_{\alpha,\beta}}$: (a) $T=1$, (b) $T=2$, (c) $T=2.3$, (d) $T=6$.}
 \label{fig2}
 \end{figure}
 Note that choices of $\alpha$ and $\beta$ uniquely determine the value of $d_{\nu_{\alpha,\beta}}$. Thus, Fig \ref{fig3} shows which values of $\alpha$ and $\beta$  correspond to existence (`green') or non-existence (`red') of the most probable paths from state -1 to state 1 of system (\ref{example-2}).
 %Obviously, when $T=2$, the value region of $\alpha$ and $\beta$ is less than the value region of $T=1$. However, by taking more values of $T$, we haven't found the general rule about the size of the region of $\alpha$ and $\beta$ yet but we find the shape of the value area is basically the same.
\begin{figure}
 \centering
 \subfigure[]{
 %label{fig:subfig:a}%
 \includegraphics[width=0.45\textwidth]{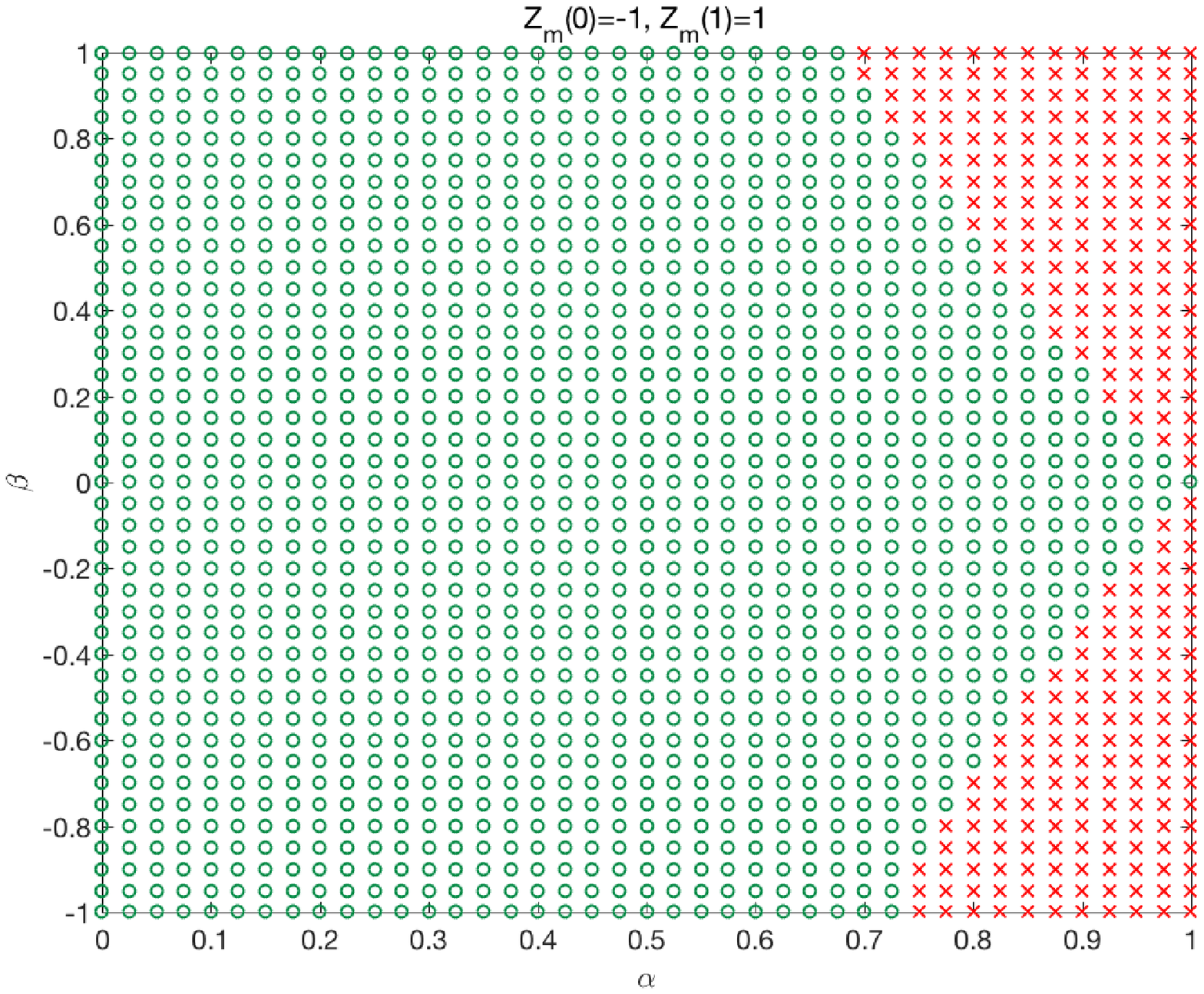}
 }
 \subfigure[]{
 %label{fig:subfig:a}%
 \includegraphics[width=0.45\textwidth]{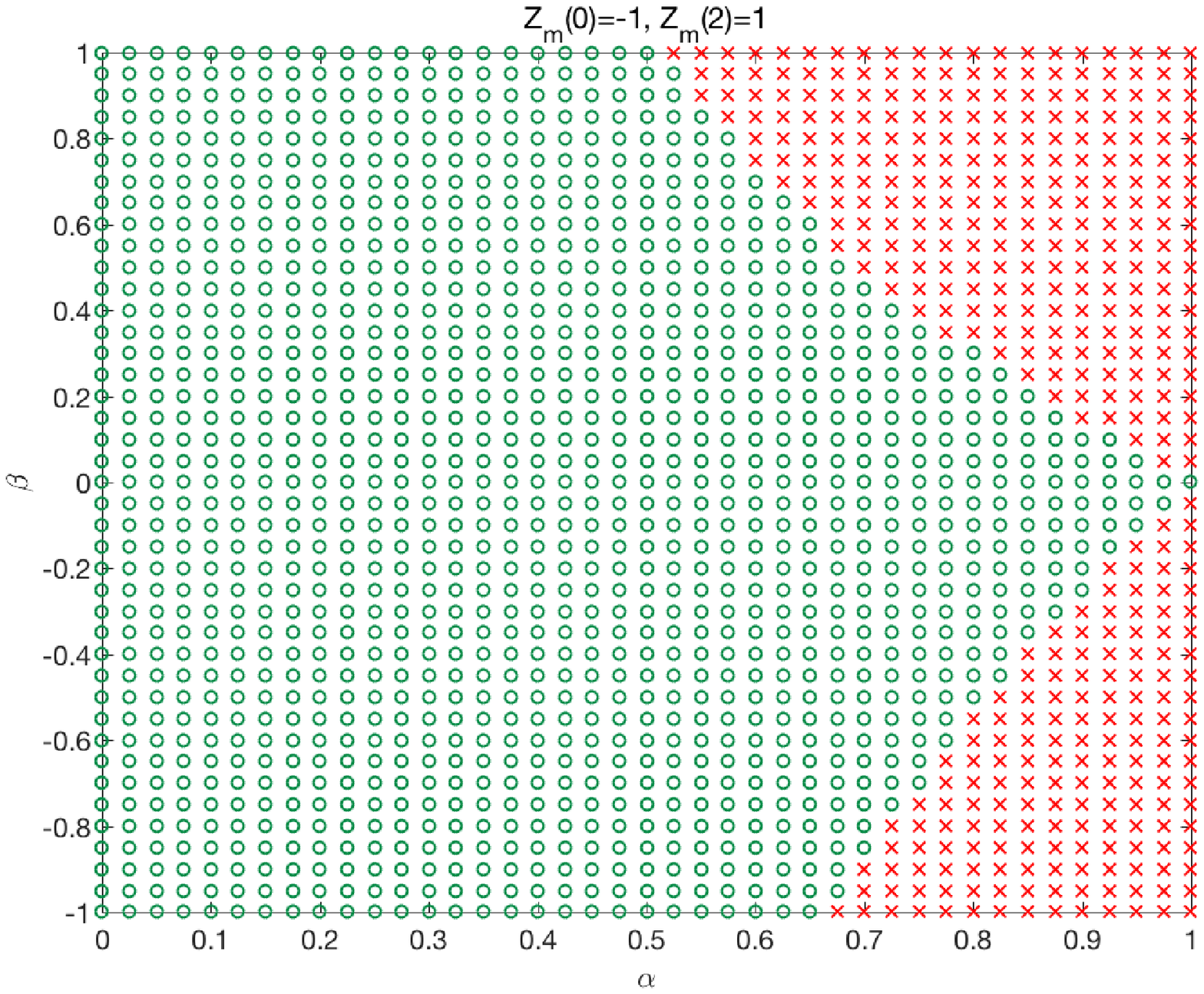}
 }
 \subfigure[]{
 %label{fig:subfig:a}%
 \includegraphics[width=0.45\textwidth]{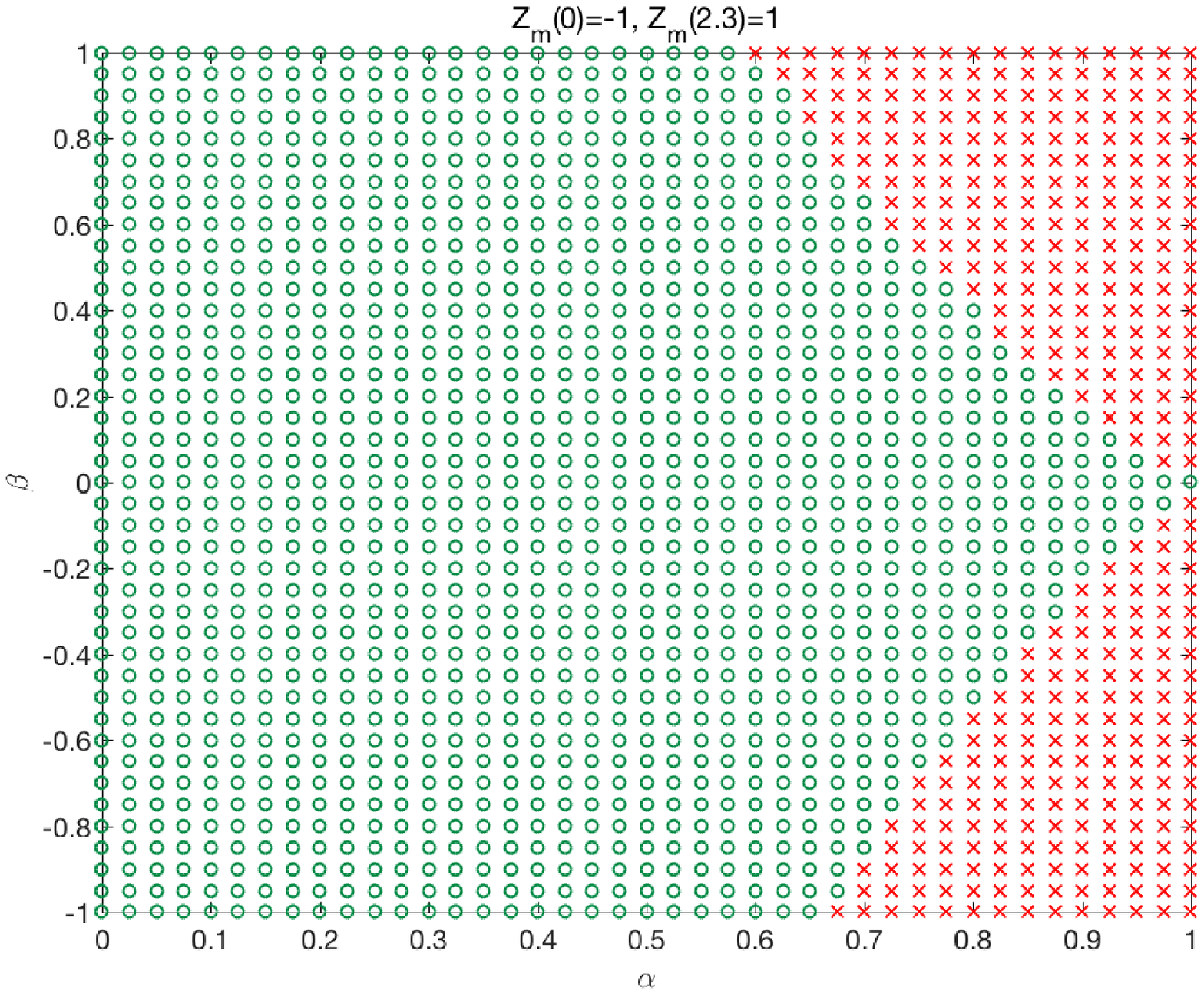}
 }
 \subfigure[]{
 %label{fig:subfig:a}%
 \includegraphics[width=0.45\textwidth]{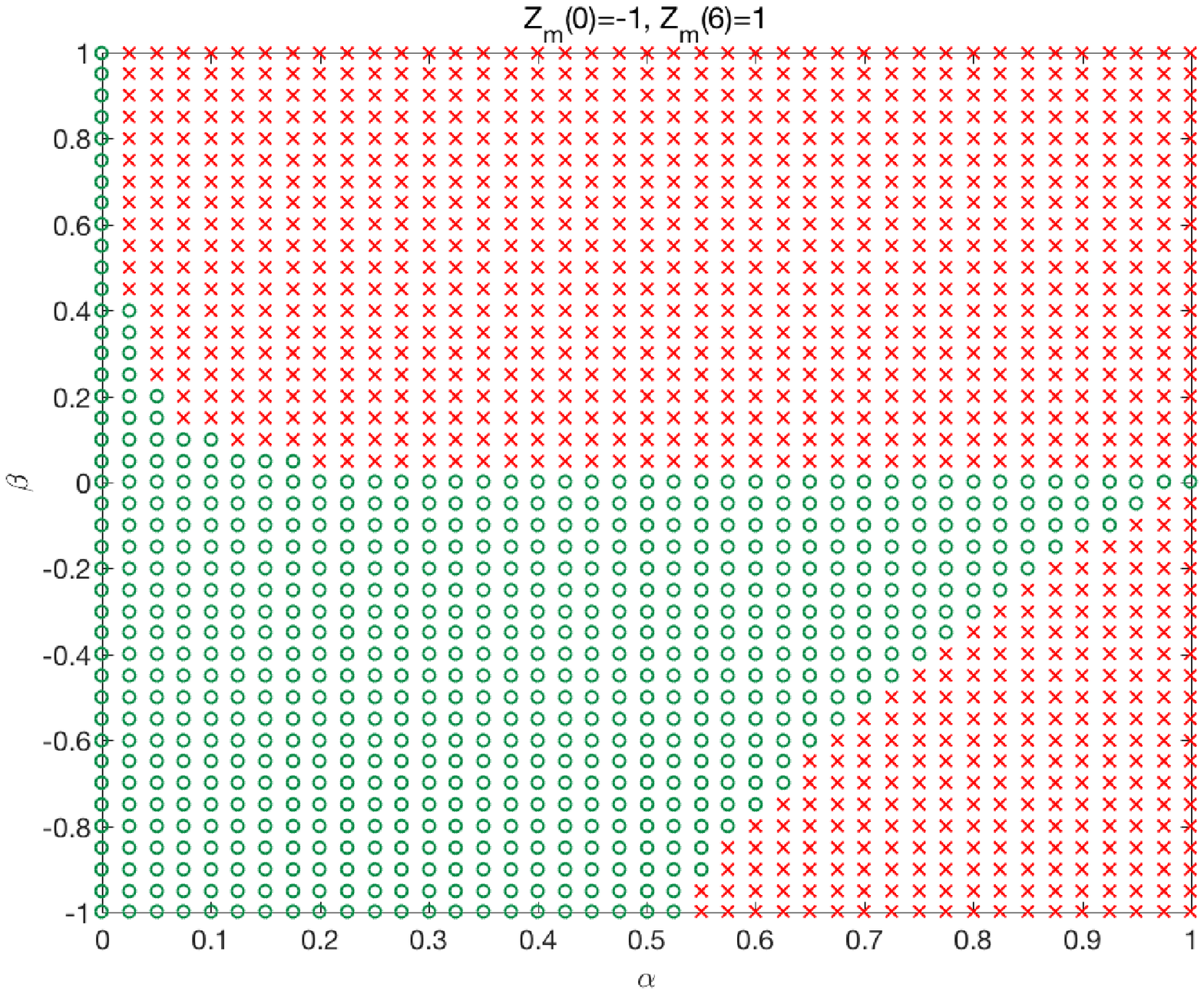}
 }
 \subfigure[]{
 %label{fig:subfig:a}%
 \includegraphics[width=0.45\textwidth]{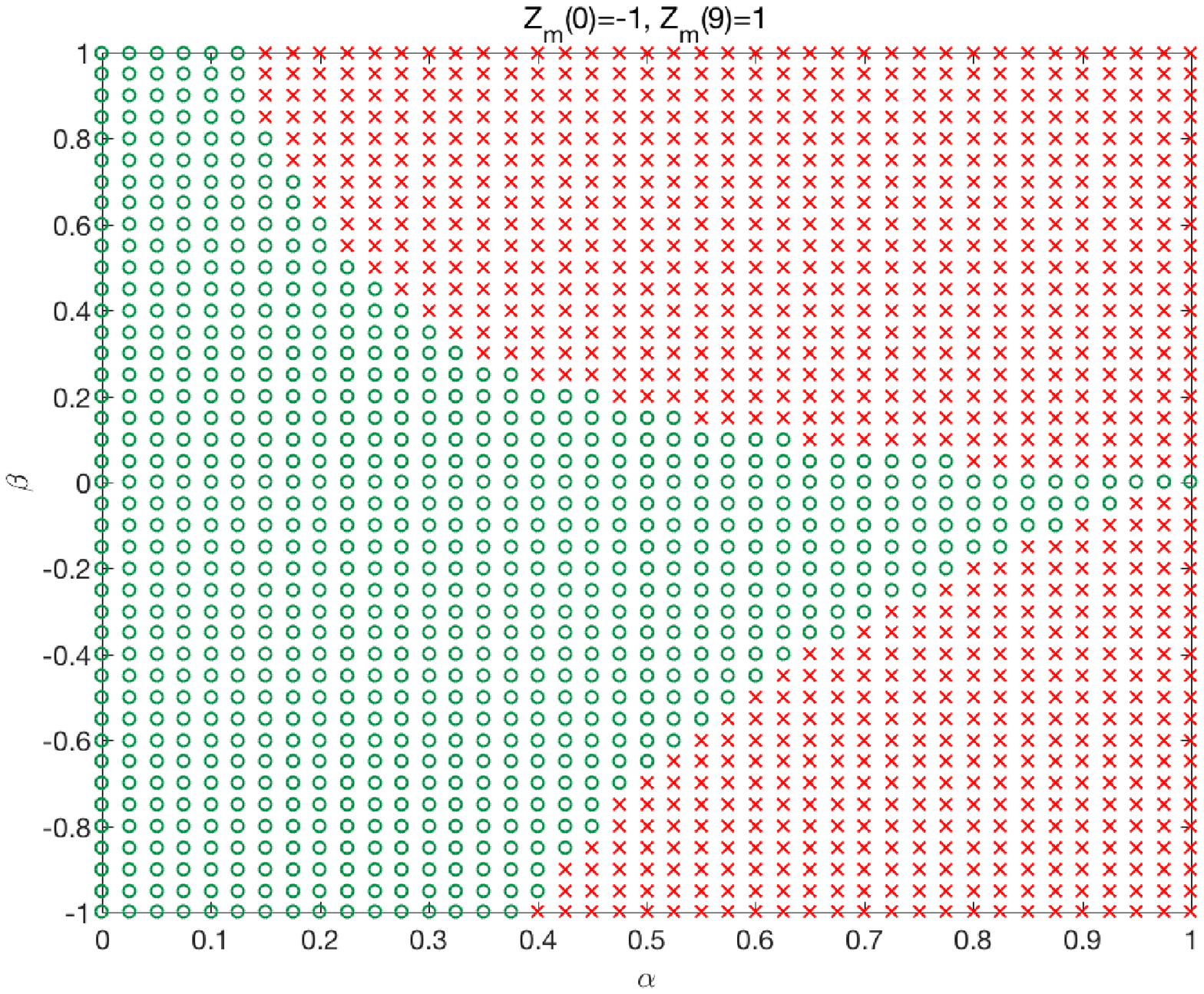}
 }
 \subfigure[]{
 %label{fig:subfig:a}%
 \includegraphics[width=0.45\textwidth]{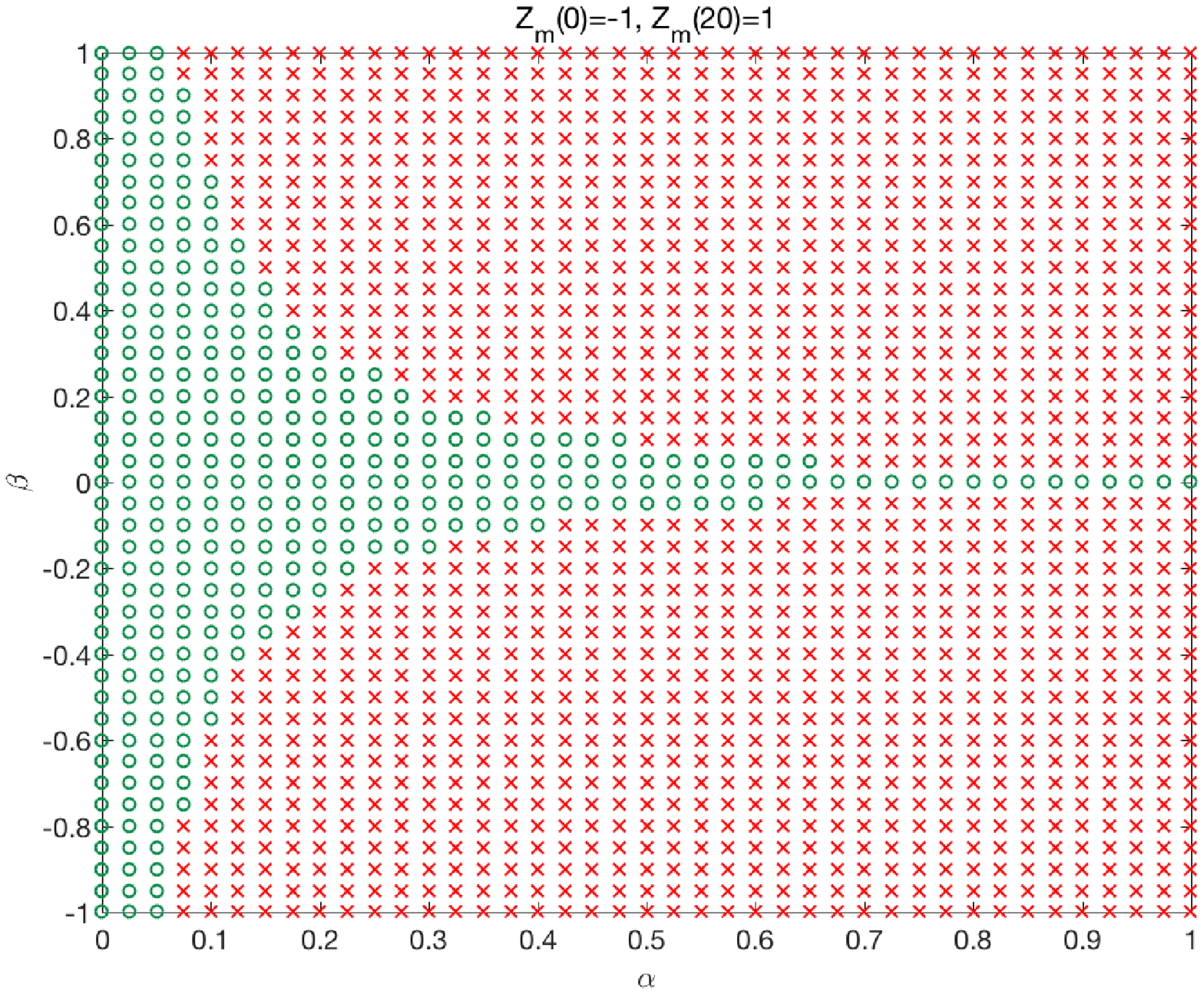}
 }
 \caption{(Color online) The parameter regions of $\alpha$ and $\beta$ with green (red) indicating existence (non-existence) of the most probable paths: (a) $T=1$, (b) $T=2$, (c) $T=2.3$, (d) $T=6$, (e) $T=9$, (f) $T=20$. }
 \label{fig3}
 \end{figure}

\section{Conclusion and discussion }

In this work, we have derived the Onsager-Machlup function for scalar  stochastic differential equations with L\'evy motion as well as Brownian motion. With this function as a Lagrangian, we have characterized the most probable path as the solution of  the corresponding Euler-Lagrange equation, under two-point boundary conditions. This Onsager-Machlup function is consistent with that for stochastic differential equations with  Brownian motion alone, and it thus captures the effect of non-Gaussian fluctuations in understanding the most probable paths.

The most probable paths are often sought,  in order to reveal the most likely pathways in transition phenomena \cite{DB,Duan}. Our work provides a tool to investigate this issue for stochastic dynamical systems with non-Gaussian as well as Gaussian noise, as illustrated in Section 6. Note that this OM theory for the most probable paths does not require noise intensity to be small. Thus it is different from the  most probable paths  based on large deviation principles, which hold for small noise.

The results of this work (Theorem \ref{theorem 4.1}) are valid for a scalar stochastic differential equation with a general L\'evy process with jump measure $\nu$, as long as the integral $\int_{|\xi|<1} \xi \nu(d \xi)$ is finite. In particular, they are valid for the stochastic differential equations with $\alpha$-stable L\'evy motion with $0<\alpha<1$.

Moreover, this work can also be extended to higher dimensional systems (Remark \ref{Remark 4.3}).

However, this approach to derive Onsager-Machlup  action functionals requires non-vanishing and constant diffusion
component in the random  fluctuations, and   appears not to work for     stochastic differential equations with pure jump L\'evy noise. The Girsanov transformation, which plays a crucial role in this approach, is of little help in pure jump   case, since it can not help to obtain the uniqueness in distribution. As also realized earlier in \cite{XCS} that the Girsanov transformation appears not to work for their case, since it transforms the Poisson process to a general semi-martingale that cannot be handled easily.

% A new approach seems to be needed.and with the new OM, the relation between the most probable paths defined by this new OM and  by large deviations through small noise limit would become clear. .

\renewcommand{\theequation}{\thesection.\arabic{equation}}
\setcounter{equation}{0}
\section*{Appendix: Proof of (\ref{F}) and (\ref{F2}) }
We now give the proof of the technical formulas (\ref{F}) and (\ref{F2}) which are based on the It\^o formula for stochastic integrals.
\begin{proof}[{\bf Proof of (\ref{F})}] By using It\^o formula for stochastic integrals, we get
\begin{align} dV(Y_t)
=&\{\frac{1}{c}a(Y_{t-})k(Y_{t-})+\frac{c}{2}\frac{da(x)}{dx}(Y_{t-})\}dt+a(Y_{t-})dB_t\notag\\
&+\int_{|\xi|<1}[V(Y_{t-}+\xi)-V(Y_{t-})]\tilde{N}(dt,d\xi)+\int_{|\xi|<1}[V(Y_{t-}+\xi)-V(Y_{t-})-\frac{\xi}{c}a(Y_{t-})]\nu(d\xi)dt\notag\\
=&\{\frac{1}{c}a(Y_{t-})k(Y_{t-})+\frac{c}{2}\frac{da(x)}{dx}(Y_{t-})\}dt+a(Y_{t-})dB_t+
\int_{|\xi|<1}\frac{\xi}{c}a(Y_{t-})\tilde{N}(dt,d\xi)\notag\\
&+ \int_{|\xi|<1}[V(Y_{t-}+\xi)-V(Y_{t-})-\frac{\xi}{c}a(Y_{t-})]N(dt,d\xi)\notag\\
=&\{\frac{1}{c}a(Y_{t-})k(Y_{t-})+\frac{c}{2}\frac{da(x)}{dx}(Y_{t-})\}dt+a(Y_{t-})dB_t-
\int_{|\xi|<1}\frac{\xi}{c}a(Y_{t-})\nu(d\xi)\notag\\
&+ \int_{|\xi|<1}[V(Y_{t-}+\xi)-V(Y_{t-})]N(dt,d\xi).\notag
\end{align}
By Integrating from $s$ to $u$ and using the initial condition, we obtain
\begin{align}
\int_s^u a(Y_{t-})dB_t=&V(Y_u)-V(x_0)-\frac{1}{2}\int_s^u\{\frac{2}{c}a(Y_{t-})k(Y_{t-})+c\frac{da(x)}{dx}(Y_{t-})\}dt\notag\\
&-\int_s^u\int_{|\xi|<1}[V(Y_{t-}+\xi)-V(Y_{t-})]N(dt,d\xi)+\int_s^u\int_{|\xi|<1}\frac{\xi}{c}a(Y_{t-})\nu(d\xi)dt\notag\\
=&V(Y_u)-V(x_0)-\frac{1}{2}\int_s^u\{\frac{2}{c}a(Y_{t-})k(Y_{t-})+c\frac{da(x)}{dx}(Y_{t-})\}dt\notag\\
&-\Sigma_{s\leq t\leq u}[V(Y_t)-V(Y_{t-})]\chi_{|\xi|<1}(\Delta Y_t)+\int_s^u\int_{|\xi|<1}\frac{\xi}{c}a(Y_{t-})\nu(d\xi)dt,\notag
\end{align}
where the last step is based on the definition of Poisson random measure $N(dt, d\xi)$.\par
Replacing the Ito integral in (\ref{RND1}) by the   relation above, we get the expression $F[y(t)]$ for (\ref{RND1}) given by (\ref{F}).
\end{proof}

\begin{proof}[{\bf Proof of (\ref{F2})}]  By using It\^o formula, we get
\begin{align}
&dV_X(X_t,z_0(t))\notag\\
=&\{\frac{\partial V_X(x,z_0(t))}{\partial t}\mid_{x=X_{t-}}+\frac{1}{c}a_X(X_{t-},z_0(t))f(X_{t-})+\frac{c}{2}\frac{\partial a_X(x,z_0(t))}{\partial t}\mid_{x=X_{t-}}\}dt\notag\\
&+a_X(X_{t-},z_0)dB_t +
\int_{|\xi|<1}[V_X(X_{t-}+\xi,z_0(t))-V_X(X_{t-},z_0(t))]\tilde{N}(dt,d\xi)\notag\\
&+\int_{|\xi|<1}[V_X(X_{t-}+\xi,z_0(t))-V_X(X_{t-},z_0(t))-\frac{\xi}{c}a_X(X_{t-},z_0(t))]\nu(d\xi)dt\notag\\
=&\{\frac{\partial V_X(x,z_0(t))}{\partial t}\mid_{x=X_{t-}}+\frac{1}{c}a_X(X_{t-},z_0(t))f(X_{t-})+\frac{c}{2}\frac{\partial a_X(x,z_0(t))}{\partial t}\mid_{x=X_{t-}}\}dt\notag\\
&+a_X(X_{t-},z_0)dB_t +
\int_{|\xi|<1}\frac{\xi}{c}a_X(X_{t-},z_0(t))\tilde{N}(dt,d\xi)\notag\\
&+\int_{|\xi|<1}[V_X(X_{t-}+\xi,z_0(t))-V_X(X_{t-},z_0(t))-\frac{\xi}{c}a_X(X_{t-},z_0(t))]N(dt,d\xi)\notag\\
=&\{\frac{\partial V_X(x,z_0(t))}{\partial t}\mid_{x=X_{t-}}+\frac{1}{c}a_X(X_{t-},z_0(t))f(X_{t-})+\frac{c}{2}\frac{\partial a_X(x,z_0(t))}{\partial t}\mid_{x=X_{t-}}\}dt\notag\\
&+a_X(X_{t-},z_0)dB_t-
\int_{|\xi|<1}\frac{\xi}{c}a_X(X_{t-},z_0(t))\nu(d\xi)dt\notag\\
&+ \int_{|\xi|<1}[V_X(X_{t-}+\xi,z_0(t))-V_X(X_{t-},z_0(t))]N(dt,d\xi).\notag
\end{align}
By integrating above formula from $s$ to $u$, we obtain:
\begin{align}
&\int_s^u a_X(X_{t-},z_0)dB_t\notag\\
=&V_X(X_u,z_0(u))-V_X(x_0,z_0(s))+\int_s^u\int_{|\xi|<1}\frac{\xi}{c}a_X(X_{t-},z_0)\nu(d\xi)dt\notag\\
&-\frac{1}{2}\int_s^u\{2\frac{\partial V_X(x,z_0(t))}{\partial t}\mid_{x=X_{t-}}
+\frac{2}{c}a_X(X_{t-},z_0(t))f(X_{t-})+c\frac{\partial a_X(x,z_0(t))}{\partial t}\mid_{x=X_{t-}}\}dt\notag\\
&-\int_s^u\int_{|\xi|<1}[V_X(X_{t-}+\xi,z_0(t))-V_X(X_{t-},z_0(t))]N(dt,d\xi)\notag\\
=&V_X(X_u,z_0(u))-V_X(x_0,z_0(s))+\int_s^u\int_{|\xi|<1}\frac{\xi}{c}a_X(X_{t-},z_0)\nu(d\xi)dt\notag\\
&-\frac{1}{2}\int_s^u\{2\frac{\partial V_X(x,z_0(t))}{\partial t}\mid_{x=X_{t-}}+\frac{2}{c}a_X(X_{t-},z_0(t))f(X_{t-})
+c\frac{\partial a_X(x,z_0(t))}{\partial t}\mid_{x=X_{t-}}\}dt\notag\\
&-\Sigma_{s\leq t \leq u}[V_X(X_{t},z_0(t))-V_X(X_{t-},z_0(t))]\chi_{|\xi|<1}(\Delta X_t).\notag
\end{align}
Replacing the Ito integral in (\ref{RND2}) by the relation above, we get the expression $J_X[X_t,z_0(t)]$ for (\ref{RND2}) given by (\ref{F2}).
\end{proof}

\section*{Acknowledgements}
The authors would like to thank Professor Jianglun Wu, Dr Pingyuan Wei, Dr Wei Wei, Dr Qi Zhang, Dr Qiao Huang, Dr Yuanfei Huang, and Dr Jianyu Hu for helpful discussions. This work was partly supported by the NSF grant 1620449, and NSFC grants 11531006 and  11771449.

\section*{References}


\begin{thebibliography}{99}
\bibitem{Ap}
Applebaum D 2009 {\em L\'{e}vy Processes and Stochastic Calculus} 2nd ed (New York: Cambridge University Press)

%\bibitem{Billingsley}
%P. Billingsley 1968 \emph{Convergence of Probability Measure} Wiley, New York
\bibitem{ABW}
Albeverio S, Brzezniak Z and WU J 2010 Existence of global solutions and invariant measures for stochastic differential equations driven by poisson type noise with non-lipschitz coefficients \emph{Journal of Mathematical Analysis and Applications} \textbf{371(1)} 309-322

\bibitem{BDS}
Bach A, D\"{u}rr D and Stawicki B 1977 Functionals of Paths of a Diffusion Process and Onsager-Machlup Function \emph{Z. Physik B} \textbf{26} 191-193

\bibitem{XCS}
Bardina X, Rovira C and Tindel S 2002 Asymptotic evaluation of the poisson measures for tubes around jump curves \emph{Applications Mathematicae}  \textbf{29(2)} 145-156

\bibitem{BSW}
B\"{o}ttcher B, Schilling R L and Wang J 2014 {\em L\'{e}vy Matters III: L\'{e}vy-Type Processes: Construction,
Approximation and Sample Path Properties} (New York: Springer)

\bibitem{B}
Br\"{o}cker J 2018 What is the correct cost functional for variational data assimilation? \emph{Climate Dynamics} \textbf{11} 1-11

\bibitem{DGW}
Deza R, Iz\'{u}s G G and Wio H S 2009 Fluctuation theorems from non-equilibrium onsager-machlup theory for a brownian particle in a time-dependent harmonic potential \emph{Central European Journal of Physics} \textbf{7(3)} 472-478

\bibitem{DPD}
Ditlevsen P D 1999 Observation of α-stable noise induced millennial climate changes from an ice-core record \emph{Geophysical Research Letters} \textbf{26(10)} 1441-1444

\bibitem{Duan}
Duan J 2015 \emph{An Introduction to Stochastic Dynamics} (New York: Cambridge University Press)

\bibitem{DB}
D\"{u}rr D and Bach A 1978 The Onsager-Machlup Function as Lagrangian for the Most Probable Path of a Diffusion Process \emph{Commun. math. Phys.} \textbf{60} 153-170

\bibitem{Evans}
Evans L C 2010 {\em Partial Differential Equations} 2nd ed (Providence RI: American Mathematical Society)

\bibitem{FK}
Fujita T and Kotani S 1982 The Onsager-Machlup Function for Diffusion Processes  \emph{J. Math. Kyoto Univ.} \textbf{22(1)} 115-130

%\bibitem{FK}
%Haas K R, Yang H, Chu J W. Analysis of Trajectory Entropy for Continuous Stochastic Processes at Equilibrium[J]. The Journal of Physical Chemistry B, 2014, 118(28): 8099-8107.
\bibitem{HYC}
Haas K R, Yang H and Chu J W 2014 Analysis of trajectory entropy for continuous stochastic processes at equilibrium \emph{Journal of Physical Chemistry B} \textbf{118(28)} 8099-107

\bibitem{HT}
Hara K and Takahashi Y 2016 Stochastic analysis in a tubular neighborhood or onsager-machlup functions revisited arXiv:1610.06670v1


\bibitem{HP}
Hartman P 2002 {\em Ordinary Differential Equations} 2nd ed (Philadephia: Society for industrial and applied mathematics)

\bibitem{KHH}
Kuo H H 1975 {\em Gaussian measures in Banach spaces in: Lecture notes in mathematics} \textbf{463}
(Berlin-Heidelberg-New York: Springer)

\bibitem{Ishikawa}
Ishikawa Y 2013 {\em Stochastic Calculus of Variations for Jump Processes} (Berlin: Walter de Gruyter)

\bibitem{IS}
Ikeda N and Watanabe S 1989 {\em Stochastic Differential Equations and Diffusion Processes } 2nd ed (Amsterdam: North-Holland Mathcmnticnl Library)

\bibitem{MN}
Moret S and  Nualart D 2002 Onsager-Machlup functional for the fractional Brownian motion \emph{Probab. Theory. Rel. } \textbf{124(2)} 227-260

\bibitem{Oksendal}
Oksendal B 2003 {\em Stochastic Differential Equations} 6th ed (Berlin: Springer)

\bibitem{OM}
Onsager L and Machlup S 1953 Fluctuations and irreversible processes, I, II  \emph{Phys. Rev.} \textbf{91} 1505-1512 1512-1515

\bibitem{Pollard}
Pollard D 1984 {\em Convergence of Stochastic Processes} (New York: Springer)

\bibitem{Sato}
Sato K-I 1999 {\em L\'{e}vy Processes and Infinitely Divisible Distributions} (Cambridge: Cambridge University Press)


\bibitem{Se}
Sebestyen G 2011 {\em Numerical Solution of Two point Boundary Value Problems} B.S.C Theses Department of Applied Analysis Lorand University Budapest

\bibitem{S}
Singh N 2008 Onsager-machlup theory and work fluctuation theorem for a harmonically driven brownian particle \emph{Journal of Statistical Physics} \textbf{131(3)} 405-414

\bibitem{St}
Stratonovich R L 1971 On the Probability Functional of Diffusion Processes \emph{Selected Trans. in Math. Stat. Prob } \textbf{10} 273-286

\bibitem{Sug}
Sugiura N 2017 The onsager--machlup functional for data assimilation \emph{Nonlinear Processes in Geophysics} \textbf{24(4)} 1-15

\bibitem{TM}
Tisza L and Manning I 1957 Fluctuations and Irreversible Thermodynamics \emph{Phys. Rev.} \textbf{105(6)} 1695-1705

\bibitem{TS}
Takahashi Y and Watanabe S 1980 The Probability Functionals (Onsager- Machlup Functions) of Diffusion Processes \emph {Springer Lecture Notes in Math.} \textbf{851} 432-463

\bibitem{Wan}
Wan X, Yu H and Zhai J 2018 Convergence analysis of a finite element approximation of minimum action methods \emph{Siam J. Numer. Anal.} \textbf{56(3)} 1597-1620

\bibitem{W}
Woyczy\'{n}ski W 2001 L\'{e}vy processes in the physical sciences \emph{L\'{e}vy processes: Theory and
Applications} (Boston: Birkh\"{a}user)

\bibitem{WCDKL}
Wang H, Cheng X, Daun J, Kurths and Li X 2018 Likelihood for transcriptions in a genetic
regulatory system under asymmetric stable L\'{e}vy noise \emph{Chaos} \textbf{28} 013121


\bibitem{XZ}
Xi F and Zhu C 2017 Jump type stochastic differential equations with non-Lipschitz coefficients: Non-confluence, Feller and strong Feller properties, and exponential ergodicity \emph{J. Differential Equations} revised

\bibitem{ZLDK}
Zheng Y, Serdukova L, Duan J and Kurths J 2016 Transitions in a genetic transcriptional regula- tory system under L\'{e}vy motion \emph{Sci. Rep.} \textbf{6} 29274


\end{thebibliography}
\end{document}